\definecolor{hypcolor}{rgb}{0.85, 0.48, 0.0}
\lstdefinestyle{leanHH}{
  language=lean,
  moredelim=**[is][\color{hypcolor}]{|!}{!|},
}
\definecolor{keywordcolor}{rgb}{0.0, 0.0, 0.0}   
\definecolor{tacticcolor}{rgb}{0.0, 0.3, 0.8}    
\definecolor{commentcolor}{rgb}{0.4, 0.4, 0.4}   
\definecolor{symbolcolor}{rgb}{0.0, 0.0, 0.0}    
\definecolor{sortcolor}{rgb}{0.0, 0.0, 0.0}      
\definecolor{attributecolor}{rgb}{0.7, 0.1, 0.1} 
\newenvironment{CenteredBox}{%
\begin{Sbox}}{
\end{Sbox}\centerline{\parbox{\wd\@Sbox}{\TheSbox}}}
\newcommand{\vEq}{\textcolor{hypcolor}{\ttfamily Eq}\xspace}
\newcommand{\vmap}{\textcolor{hypcolor}{\ttfamily map}\xspace}
\newcommand{\vrev}{\textcolor{hypcolor}{\ttfamily reverse}\xspace}
\newcommand{\vmaprev}{\textcolor{hypcolor}{\ttfamily map\_reverse}\xspace}
\newcommand{\vrevrev}{\textcolor{hypcolor}{\ttfamily reverse\_reverse}\xspace}
\newcommand{\vneggoal}{\textcolor{hypcolor}{\ttfamily neg\_goal}\xspace}
\newcommand{\vnewname}{\ensuremath{\mathit{newname}}\xspace}
\newcommand{\vlf}{\ensuremath{\mathit{lf}}\xspace}
\newcommand{\vlvar}{\ensuremath{\mathit{lvar}}\xspace}
\newcommand{\vargs}{\ensuremath{\mathit{args}}\xspace}
\newcommand{\vlargs}{\ensuremath{\mathit{largs}}\xspace}
\newcommand{\vatype}{\ensuremath{\mathit{atype}}\xspace}
\newcommand{\vaabst}{\ensuremath{\mathit{aabst}}\xspace}
\newcommand{\vbabst}{\ensuremath{\mathit{babst}}\xspace}
\newcommand{\vmatches}{\ensuremath{\mathit{matches}}\xspace}
\newcommand{\vmaxInsts}{\ensuremath{\mathit{maxInsts}}\xspace}
\newcommand{\vhi}{\ensuremath{\mathit{hi}}\xspace}
\newcommand{\vci}{\ensuremath{\mathit{ci}}\xspace}
\newcommand{\vactive}{\ensuremath{\mathit{active}}\xspace}
\newcommand{\vfront}{\ensuremath{\mathit{front}}\xspace}
\newcommand{\vtype}{\ensuremath{\mathit{type}}\xspace}
\newcommand{\vprevci}{\ensuremath{\mathit{prevci}}\xspace}
\newcommand{\vprevhi}{\ensuremath{\mathit{prevhi}}\xspace}
\newcommand{\vnewci}{\ensuremath{\mathit{newci}}\xspace}
\newcommand{\vnewhi}{\ensuremath{\mathit{newhi}}\xspace}
\newcommand{\vmonohi}{\ensuremath{\mathit{monohi}}\xspace}
\newcommand{\vnh}{\ensuremath{\mathit{nh}}\xspace}
\newcommand{\vnc}{\ensuremath{\mathit{nc}}\xspace}
\newcolumntype{K}[1]{>{\centering\arraybackslash}p{#1}}
\DeclareRobustCommand{\refappendix}[1]{%
  \ifdefined\cameraReady
    Appendix~\ref*{#1} of \cite{ThisPaperOnArxiv}%
  \else
    Appendix~\ref{#1}%
  \fi
}
\DeclareRobustCommand{\maybeappendix}[1]{%
  \ifdefined\cameraReady
    \refstepcounter{section}
  \else
    \section{#1}
  \fi
}
\begin{document}

\title{Lean-auto: An Interface between Lean 4 and Automated Theorem Provers}


  \author{
    Yicheng Qian\inst{1}\textsuperscript{(\Letter)}\orcidlink{0009-0008-0194-9572},
    Joshua Clune\inst{2}\orcidlink{0000-0003-4047-6196},
    Clark Barrett\inst{1}\orcidlink{0000-0002-9522-3084}, \\ and
    Jeremy Avigad\inst{2}\orcidlink{0000-0003-1275-315X} \\
  }

  \authorrunning{Y. Qian et al.}

  \institute{
    Stanford University, Stanford, USA \\
    \email{pratherc@stanford.edu}, \and
    Carnegie Mellon University, Pittsburgh, USA
  }

  \SetWatermarkAngle{0}
  \SetWatermarkText{\raisebox{13.4cm}{
    \hspace{0.1cm}
    \href{https://doi.org/10.5281/zenodo.15315259}{\includegraphics{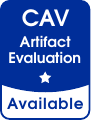}}
    \hspace{9cm}
    \includegraphics{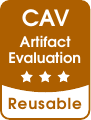}
  }}

\maketitle              
\begin{abstract}
  Proof automation is crucial to large-scale formal mathematics and software/hardware verification projects
  in ITPs. Sophisticated tools called hammers have been developed to provide general-purpose proof
  automation in ITPs such as Coq and Isabelle, leveraging the power of ATPs. An important
  component of a hammer is the translation algorithm from the ITP's logical system
  to the ATP's logical system. In this paper, we propose a novel translation algorithm
  for ITPs based on dependent type theory. The algorithm is implemented in Lean 4 under
  the name Lean-auto. When combined with ATPs, Lean-auto provides
  general-purpose, ATP-based proof automation in Lean 4 for the first time.
  Soundness of the main translation procedure is guaranteed, and experimental results
  suggest that our algorithm is sufficiently complete to automate the proof of many
  problems that arise in practical uses of Lean 4. We also find that Lean-auto
  solves more problems than existing tools on Lean 4's math library Mathlib4.
  
  \keywords{Proof Automation \and Lean 4 \and Dependent Type Theory}
\end{abstract}

\section{Introduction}

  Interactive Theorem Provers (ITPs) \cite{Harrison2014HistoryOI}
  are widely used in formal mathematics and software/hardware verification.
  When using ITPs, straightforward but tedious proof
  tasks often arise during the proof development process.  Due to the limited
  built-in automation in ITPs, discharging these proof tasks can require significant
  manual effort.  Hammers
  \cite{Blanchette2016HammeringTQ,Czajka2018HammerFC} are proof automation tools for
  ITPs which utilize Automated Theorem Provers (ATPs, including Satisfiability
  Modulo Theories (SMT) solvers).
  Hammers have proved useful
  because they can solve many proof tasks automatically \cite{Paulson2012ThreeYO}.

  A hammer has three main components: premise selection, translation from ITP to
  ATP, and proof reconstruction from ATP to ITP. Premise selection collects
  the necessary premises (usually a list of theorems) needed to solve a proof task, translation exports
  the collected information from the ITP to the ATP, and proof reconstruction generates
  a proof in the ITP based on the output of the ATP.
  Our project Lean-auto primarily focuses on the translation from Lean 4 to ATPs.
  We note that Lean-auto does have a proof reconstruction procedure which fully supports
  one of the three types of ATPs we use to evaluate Lean-auto.
  For ATPs with proof reconstruction support, if the ATP successfully finds
  a proof, Lean-auto will generate proof terms and check them using the Lean 4 kernel.
  For other ATPs, if the ATP successfully finds a proof, Lean-auto will mark the
  problem as solved in Lean 4, but will generate a warning to indicate that Lean-auto
  trusts the ATPs' output. Ongoing projects are expected to implement premise selection and more
  proof reconstruction support. See Sect. \ref{sectexpr} for more discussion.
  
  The discrepancies between logical systems of ATPs and ITPs pose
  significant challenges to translation procedures between them.
  Several popular ITPs are based on highly expressive logical systems.
  For example, Isabelle \cite{Isabelle} is based on polymorphic higher-order logic, while
  Coq \cite{CoqRefMan} and Lean 4 \cite{Lean4}\footnote{Agda \cite{Agda}
  is also dependently typed, but is based on Martin-Löf type theory.} are based on an even more
  expressive logical system called dependent type theory (also known as
  $\lambda C$ in the lambda cube) \cite{LambdaWithType,Coquand1988}.\footnote{Or \textit{calculus of inductive
  constructions (CIC)}, depending on whether inductive types are considered as an extension.}
  Moreover, features such as typeclasses \cite{TypeClassHaskell}, universe polymorphism \cite{UPolyCoq}, and inductive types \cite{CICIndDef}
  are commonly used as extensions to the base logical system to enhance usability of the ITPs.
  On the other hand, ATPs are usually based on less expressive logical systems such
  as first-order logic (FOL) \cite{CVC5,Vampire,Z3Paper,EProver} and (in recent years)
  higher-order logic (HOL) \cite{HOVampire,ZipperpositionMakeWork,HOEProver}.
  An overview of the various logical systems relevant to our work is given in Sect. \ref{sublogsys}.

  \begin{wrapfigure}{L}{0.5\textwidth}
    \centering
    \tikzstyle{block} = [rectangle, draw, text width=12em, text centered, rounded corners, minimum height=3em]
    \begin{tikzpicture}
      [node distance=1.75cm, start chain=going below,]
      \node (n0) at (0,0) [text width=10em, text centered] {Lean 4};
      \node (n1) at ($(n0) + (0,-1.2cm)$) [block] {Preprocessing \linebreak (Sect. \ref{sectprep})};
      \node (n2) [block, below of=n1] {Quantifier instantiation \linebreak (Sect. \ref{sectinst})};
      \node (n3) [block, below of=n2] {$\lambda_\to^*$ abstraction \linebreak (Sect. \ref{sectabst})};
      \node (n4) [block, below of=n3] {Universe lifting \linebreak (\refappendix{appulift})};
      \node (n5) at ($(n4) + (0, -1.2cm)$) [text width=10em, text centered] {HOL};
      \draw [->] (n0) -- (n1);
      \draw [->] (n1) -- (n2) node[midway, right] {$\lambda C$};
      \draw [->] (n2) -- (n3) node[midway, right] {QMono $\lambda C$};
      \draw [->] (n3) -- (n4) node[midway, right] {$\text{HOL}^*$};
      \draw [->] (n4) -- (n5);
    \end{tikzpicture}
    \caption{Translation workflow of Lean-auto.}
    \label{fig:title}
  \end{wrapfigure}

  There are two existing approaches for translation from more expressive
  logical systems to less expressive ones: encoding-based translation and monomorphization.
  Encoding-based translation is used in CoqHammer \cite{Czajka2018HammerFC}
  to translate Coq into untyped FOL. Monomorphization is used to
  eliminate polymorphism in Isabelle Sledgehammer \cite{Blanchette2016HammeringTQ,MonoPaper,Paulson2012ThreeYO}.
  Our small-scale experiment\footnote{See \refappendix{sstrans}.} on Mathlib4 suggests that encoding-based translation tends to produce
  much larger outputs than monomorphization, which could negatively affect the performance of ATPs.
  Therefore, we use monomorphization in Lean-auto. An overview of these two
  translation methods and related discussions are given in Sect. \ref{subencmon}.

  Since ATPs have started supporting HOL in recent years \cite{HOVampire,ZipperpositionMakeWork,HOEProver},
  Lean-auto translates Lean 4 to HOL. The overall translation has
  two stages: preprocessing and monomorphization.
  Monomorphization itself has three stages: quantifier instantiation,
  $\lambda_\to^*$ abstraction, and universe lifting.
  Roughly speaking, preprocessing translates Lean 4 into dependent type
  theory,\footnote{As mentioned before, Lean 4 is different from
  dependent type theory because it includes various additional language features.}
  and monomorphization translates dependent type theory into HOL.
  The monomorphization procedure of Lean-auto is inspired by Isabelle Sledgehammer.
  However, since dependent type theory is considerably different from
  Isabelle's HOL, the monomorphization procedure is thoroughly redesigned,
  and presented in a different way in our paper. Challanges related to
  dependent type theory and Lean 4 are discussed in Sect. \ref{exqdet}.

  In our paper, we work backwards in Lean-auto's translation workflow. We
  start from $\lambda_\to^*$ abstraction (Sect. \ref{sectabst}), then quantifier instantiation (Sect. \ref{sectinst}), and
  end with preprocessing (Sect. \ref{sectprep}). This is because it is easier to begin with the simpler
  logical system and progressively take into account more features of the
  highly expressive Lean 4 language. We leave universe lifting to \refappendix{appulift}
  since it is relatively straightforward compared to the other steps.

\subsection{Related Work}\label{sectrw}

  Hammers are not restricted to ITPs with expressive logical systems.
  Several ITPs based on FOL or HOL also have their hammers, for
  example, the hammer of Mizar \cite{Mizar40Paper}, the hammer of MetaMath \cite{MetamathHammer},
  and HOL(y)Hammer \cite{HolyHammerPaper}.
  Apart from hammers, there are various other ITP proof automation tools.
  For example, Coq and Lean both come with a \textit{tactics} language, and built-in tactics
  provide users with low-level proof automation, such as Coq's
  \texttt{apply}, \texttt{rewrite}, and \texttt{destruct} tactics~\cite{CoqRefMan}, and
  Lean's \texttt{apply}, \texttt{rw}, and \texttt{cases} tactics~\cite{ThmProvingLean4}.
  Domain-specific automation tools are also common, such as the intuitionistic
  propositional logic solver \texttt{tauto} of Coq, congruence closure algorithm \texttt{congruence}
  of Coq, and integer linear arithmetic solver \texttt{omega} of Lean 4, all implemented as tactics.
  Lightweight proof search procedures in ITPs include Coq's \texttt{auto} and Lean 4's Aesop \cite{AesopPaper}.
  There are also lightweight ATPs implemented in ITPs, such as
  Isabelle's Metis \cite{Hurd2003FirstOrderPT} and \texttt{blast\_tac} \cite{Paulson1999AGT}, HOL Light's Meson \cite{HOLMesonPaper},
  and Lean 4's Duper \cite{DuperPaper}. Finally, machine learning algorithms have
  also been used to automate proof in ITPs, for example, MagnusHammer \cite{mikula2023magnushammer}
  of Isabelle, LeanDojo \cite{Yang2023LeanDojoTP} of Lean,
  GPT-f \cite{GPTFPaper} of Metamath, and ASTactic \cite{Yang2019LearningTP} of Coq.

\section{Preliminaries}

\subsection{Dependent Type Theory}\label{subdtt}

  Dependent type theory, or $\lambda C$ in the $\lambda$-cube,
  or the \textit{calculus of constructions} (CoC) \cite{LambdaWithType},
  is a highly expressive type system and logical system. It is the logical
  foundation of Coq, Lean 4, and Agda. To align with
  Lean 4, we use the variant of $\lambda C$ which contains a countable
  number of non-cumulative universe levels. The syntax of $\lambda C$ terms is defined
  inductively as follows:
  $$\mathcal{T}_C ::= V \ | \ \mathsf{U}_\ell \ | \ \mathcal{T}_C \ \mathcal{T}_C \ |
    \ \lambda (V : \mathcal{T}_C). \mathcal{T}_C \ | \ \forall (V : \mathcal{T}_C). \mathcal{T}_C,$$
  where $V$ is the set of variables, $\mathsf{U}_\ell\ (\ell \in \mathbb{N})$ are
  the sorts (i.e., the types of types), $\mathcal{T}_C \ \mathcal{T}_C$ is function application,
  $\lambda (V : \mathcal{T}_C). \mathcal{T}_C$ is $\lambda$ abstraction, and
  $\forall (V : \mathcal{T}_C). \mathcal{T}_C$ is the product type.
  $\ell$ is called the universe level of $\mathsf{U}_\ell$.
  We use $\forall$ instead of $\mathrm{\Pi}$ to align with the syntax of Lean, Coq, and Agda.
  Syntactical equality of terms will be denoted as $=$, and $\beta\eta$-equivalence of terms will be
  denoted as $\cong$.
  
  We adopt the following commonly-used notational conventions:
  function application binds stronger than $\lambda$ and $\forall$, and is left-associative;
  consecutive $\lambda$s and $\forall$s can be merged, and $\lambda$s and $\forall$s with the same
  binder type can be further merged into the same parenthesis; when the product type is non-dependent,
  $\to$ can be used instead of $\forall$. Importantly, $\to$ binds stronger than $\forall$, i.e.,
  $\forall (x : \alpha). \beta \to \gamma$ is interpreted as $\forall (x : \alpha). (\beta \to \gamma)$
  instead of $(\forall (x : \alpha). \beta) \to \gamma$, the latter being the convention in
  FOL and HOL. The abbreviations $\bot, \neg, \land, \lor, \leftrightarrow,
  =_\ell,$ and $\exists_\ell$
  are defined in the usual way.\footnote{See \refappendix{applsc}.}

  A context $\Gamma$ is a list of variable declarations $x_1 : \alpha_1, \dots, x_n : \alpha_n$.
  Type judgements will be written as $\Gamma \vdash t : \alpha$, which stands for ``$\lambda C$ term $t$
  has type $\alpha$ under context $\Gamma$.''\footnote{Derivation rules for type judgements of
  $\lambda C$ are given in \refappendix{apppts}.} If $\Gamma \vdash t : \alpha$, then $t$ is
  called a \textit{well-formed term}, and $\alpha$ is called a (well-formed)
  \textit{type}.\footnote{In $\lambda C$, all well-formed types are also well-formed terms.}
  Under context $\Gamma$, a type $\alpha$ is called \textit{inhabited} iff there exists $t$ such
  that $\Gamma \vdash t : \alpha$, in which case $t$ is called an \textit{inhabitant} of $\alpha$.
  Propositions are types of type $\mathsf{U}_0$. A \textit{proof} of a proposition
  $p : \mathsf{U}_0$ is an inhabitant of $p$. A proposition $p : \mathsf{U}_0$ is \textit{provable}
  iff it is inhabited. Given a context $\Gamma$ and a proposition $p$, we use $\Gamma \vdash? p$ to
  represent the \emph{problem} of finding a proof of $p$ under context $\Gamma$.

  For a function $f : \forall (x_1 : \alpha_1) \ \dots \ (x_n : \alpha_n). \beta$ (here $\beta$ may begin with $\forall$),
  the $n$th argument of $f$ is called a \textit{static dependent argument} iff $x_n$ occurs in $\beta$.
  In many cases, static dependent arguments are also type arguments; for example, the first and second arguments
  of $\mathsf{List.map} : \forall (\alpha \ \beta : \mathsf{U}_1). (\alpha \to
  \beta) \to \mathsf{List} \ \alpha \to \mathsf{List} \ \beta$ are both static
  dependent arguments.
  Another important concept is \textit{dependent argument}.\footnote{See \refappendix{labstalgo}
  for its formal definition.} In practical scenarios, ``dependent argument'' and ``static dependent argument'' usually
  have the same meaning. Their intricate difference  is explained in Sect. \ref{exqdet}.

  We use $\lambda C$ notation for all logical systems that can be embedded in $\lambda C$.
  When presenting Lean 4 examples, we use additional Lean 4 notational conventions.
  These are explained in Sect. \ref{sectlean}.

\subsection{Logical Systems of ITPs and ATPs}\label{sublogsys}

  In this section, we give an overview of the various logical systems
  that are relevant to our work. In the following list, the logical systems
  are ordered from the least expressive to the most expressive. Note that, except
  for $\lambda C$ and more expressive systems, all other logical systems have
  two components: term calculus (which specifies the construction and computation
  rules of terms), and logical axioms/rules.
  \begin{enumerate}
    \item Untyped FOL, or predicate logic.
    \item Many-sorted FOL.
    \item Many-sorted HOL (monomorphic HOL, or just HOL), where functions are allowed
      to take functions as arguments, and quantifiers can quantify over functions.
      Its term calculus is simply typed lambda calculus $\lambda_\to$ \cite{LambdaWithType}.
    \item Many-sorted HOL with a countable number of universe levels, denoted as $\text{HOL}^*$,
      which is discussed in Sect \ref{sectpts}. This is an
      intermediate logical system used in Lean-auto's monomorphization.
      It is essentially equivalent to HOL\footnote{In \refappendix{appulift}, we
      show that $\text{HOL}^*$ is essentially equivalent to HOL.}.
    \item HOL with rank-1 polymorphism, or polymorphic HOL. Its term calculus is
      $\lambda 2$ in the $\lambda$-cube \cite{LambdaWithType}. In polymorphic HOL, functions are
      allowed to take type arguments, and quantifiers can quantify over types. However, type
      constructors, or types dependent on types, are not allowed.
    \item Isabelle's logical system. Based on polymorphic HOL. Supports (co)inductive datatypes and recursive functions.
    \item Dependent type theory, or $\lambda C$. Compared to polymorphic HOL, types can depend on terms and types in $\lambda C$.
    \item Coq, Lean 4, and Agda's logical systems. Based on $\lambda C$. Extensions to $\lambda C$ that are
      present in (at least one of) these ITPs include (co)inductive types, universe levels,
      universe polymorphism, typeclasses, and many others.
  \end{enumerate}
  
  \noindent All previously mentioned hammers translate between these logical systems. Isabelle
  Sledgehammer translates between Isabelle and
  HOL/FOL.\footnote{The exact logical system depends on the mode being used.}
  CoqHammer translates between Coq and untyped FOL. Lean-auto translates
  between Lean 4 and monomorphic HOL. As mentioned before, Lean-auto's
  preprocessing translates Lean 4 into $\lambda C$, and monomorphization
  translates $\lambda C$ into HOL. More specifically, quantifier instantiation
  and $\lambda_\to^*$ abstraction translates $\lambda C$ into $\text{HOL}^*$,
  and universe lifting translates $\text{HOL}^*$ into HOL.

\subsection{Pure Type Systems $\lambda C, \lambda_\to, \lambda_\to^*$ and Related Logical Systems}\label{sectpts}

  The Pure Type System (PTS) \cite{LambdaWithType} formalism enables concise specification
  of a class of type systems. We use PTS to formally specify the underlying type systems
  of the logical systems used in Lean-auto's translation.

  The specification of a PTS consists of a triple $(\mathcal{S}, \mathcal{A}, \mathcal{R})$,
  where $\mathcal{S}$ is the set of \textit{sorts}, $\mathcal{A} \subseteq \mathcal{S} \times \mathcal{S}$ is
  the set of \textit{axioms}, and $\mathcal{R} \subseteq \mathcal{S} \times \mathcal{S} \times \mathcal{S}$
  is the set of \textit{rules}. An axiom $(s_1, s_2) \in \mathcal{A}$ is intended to represent
  the typing axiom $s_1 : s_2$. The syntax of PTS terms is given by
  $$\mathcal{T} ::= V \ | \ \mathcal{S} \ | \ \mathcal{T} \ \mathcal{T} \ |
    \ \lambda (V : \mathcal{T}). \mathcal{T} \ | \ \forall (V : \mathcal{T}). \mathcal{T}$$
  
  Three type systems, $\lambda C$, $\lambda_\to$, and $\lambda_\to^*$, will be formulated
  using PTS.\footnote{The derivation rules of PTS
  are given in \refappendix{apppts}.} As mentioned above, $\lambda_\to$
  is the term calculus of HOL, and $\lambda_\to^*$ is the term
  calculus of $\text{HOL}^*$. Note that $\mathsf{U}_0$ is not present in $\lambda_\to$ and $\lambda_\to^*$
  because it is a special sort for propositions in $\lambda C$.
  The type of propositions in $\text{HOL}$ and $\text{HOL}^*$ will be represented by
  a special symbol $\mathsf{Bool} : \mathsf{U}_1$.
  
  $\lambda_\to^*$ and $\lambda_\to$ are similar, except that $\lambda_\to^*$ allows a
  countable number of universe levels $\ell \in \mathbb{N}^*$, where $\mathbb{N}^*$ is the set of
  positive integers. For example,
  in the type $(\alpha \to \beta) \to \gamma$, the subterms $\alpha, \beta,$ and $\gamma$ must be of type $\mathsf{U}_1$
  in the system $\lambda_\to$; however, in $\lambda_\to^*$, it is possible that $\alpha : \mathsf{U}_{\ell_1},
  \beta : \mathsf{U}_{\ell_2}, \gamma : \mathsf{U}_{\ell_3}$ where $\ell_1, \ell_2, \ell_3$
  may be different. A technicality related to PTS requires the presence of
  the sorts $\mathsf{U}_\ell'$ in $\lambda_\to^*$, with axioms $\mathsf{U}_\ell : \mathsf{U}_\ell'$.

  The logical systems HOL and $\text{HOL}^*$ are $\lambda_\to$ and $\lambda_\to^*$ augmented
  with the symbols $\mathsf{Bool}, \bot', \to', \forall'_s(\text{for each type }s)$, their
  corresponding typing rules, and logical rules. The abbreviations $\land', \lor', \neg', \leftrightarrow, ='_s, \exists'_s$
  are defined in a way consistent with their $\lambda C$ counterparts.
  The set of HOL and $\text{HOL}^*$ terms are denoted as $\mathcal{T}_\to$ and $\mathcal{T}_\to^*$, respectively.\footnote{
  The specifications of $\lambda C, \lambda_\to$, and $\lambda_\to^*$ using PTS are given in \refappendix{applll}.
  The formal definitions of $\text{HOL}$ and $\text{HOL}^*$ are given in \refappendix{apphol}.}

\subsection{Lean and Mathlib}\label{sectlean}

  Lean is an ITP based on dependent type theory. Lean-auto
  is implemented in Lean 4, the latest version of Lean. At present, the
  most prominent project in Lean is Mathlib \cite{MathlibPaper},
  which was renamed to Mathlib4\footnote{GitHub link: https://github.com/leanprover-community/mathlib4}
  when it was moved to Lean 4. Notably, Mathlib is the foundation of the
  Liquid Tensor Experiment \cite{LiquidTensor}, which successfully
  formalizes cutting-edge results in mathematics. 

  We will follow Lean 4 conventions when presenting Lean 4 examples. \texttt{Sort $\ell$}
  represents $\mathsf{U}_\ell$, and \texttt{Type $\ell$} represents $\mathsf{U}_{\ell + 1}$.
  \texttt{Sort $1$} (or \texttt{Type $0$}) can be abbreviated as \texttt{Type},
  and \texttt{Sort $0$} can be abbreviated as \texttt{Prop}. All user-declared
  symbols, including functions, are called \textit{constants} in Lean 4.
  Constants can have universe level parameters, but for simplicity,
  they are not shown in many of our Lean 4 examples. 
  Functions are allowed to have implicit arguments, which are represented by
  $\{x : \alpha\}$ instead of $(x : \alpha)$ in the type of the function.
  Prepending \textrm{@} to the name of a function causes implicit arguments
  to become explicit. For example, given the polymorphic list map function
  with the first and second argument being implicit:

  \centerline{\texttt{List.map : $\forall$ \{$\alpha \ \beta$ : Type\}, ($\alpha$ → $\beta$) → List $\alpha$ → List $\beta$},}
  
  \noindent the expression \texttt{@List.map $\alpha$ $\beta$ f} is the same as \texttt{List.map f}, where \texttt{f : $\alpha$ → $\beta$}.

  Typeclasses are extensively used by Lean 4's built-in library and Mathlib4 to
  overload arithmetic operators and represent mathematical structures. For example,
  consider the \texttt{HAdd} typeclass and the \texttt{HAdd.hAdd} function
  used to represent the addition operator in Lean 4.
  
  \centerline{\texttt{HAdd : $\forall$ ($\alpha \ \beta \ \gamma$ : Type), Type}}
  \centerline{\texttt{HAdd.hAdd : $\forall$ \{$\alpha \ \beta \ \gamma$ : Type\} [self : HAdd $\alpha \ \beta \ \gamma$], $\alpha$ → $\beta$ → $\gamma$}}

  \noindent An inhabitant of \texttt{HAdd $\alpha \ \beta \ \gamma$}, called a
  typeclass instance, is a wrapper of a ``heterogeneous'' addition operator, with $\alpha$ and $\beta$ as its input types and
  $\gamma$ as its output type. The square bracket in the type of \texttt{HAdd.hAdd}
  indicates that the enclosed argument is an instance argument, which is a special type
  of implicit argument intended to be filled by Lean 4's typeclass inference algorithm.
  Given the syntax \texttt{x + y} where \texttt{x : $\alpha$} and \texttt{y : $\beta$},
  the typeclass inference algorithm will attempt to find a type $\gamma$ and
  an instance \texttt{inst : HAdd $\alpha \ \beta \ \gamma$}, and elaborate the
  syntax \texttt{x + y} into the expression \texttt{@HAdd.hAdd $\alpha \ \beta \ \gamma$ inst x y}.
  In \texttt{@HAdd.hAdd $\alpha \ \beta \ \gamma$ inst}, the \texttt{HAdd.hAdd} function
  unwraps \texttt{inst} and returns the addition operator. This provides a mechanism
  for overloading operators. The same mechanism is used to represent mathematical
  structures in Mathlib4.

  Lean 4 supports definitional equality. Two terms are definitionally equal
  iff they can be converted to each other via Lean 4's built-in conversion rules.
  To test definitional equality of two terms $s$ and $t$, we can either reduce $s$ and
  $t$ to their normal forms and check syntactical equality, or use the optimized
  built-in function \texttt{isDefEq}\footnote{Its full Lean 4 name is \texttt{Lean.Meta.isDefEq}.}
  which checks definitional equality of a pair of terms.

  Inductive type is another important Lean 4 feature relevant to Lean-auto.
  It is handled by Lean-auto's preprocessing stage and is discussed in Sect. \ref{sectprep}.

  Lean 4 supports classical axioms such as function extensionality, excluded middle,
  and axiom of choice. Plain CoC does not include classical axioms. In contrast, classical axioms
  are built-in\footnote{They are either declared as axioms or derived from previously declared axioms, and
  they are imported during initialization.} in Lean 4. Lean-auto uses them during proof reconstruction.

\section{Encoding-based Translation and Monomorphization}\label{subencmon}

  Encoding-based translation and monomorphization
  are two approaches to translating from more expressive logical systems to
  less expressive logical systems.

  The idea behind encoding-based translations is to encode
  constructions in the more expressive system using function symbols in the less
  expressive system and to define the translation as a recursive function on the terms and formulas
  of the more expressive system. For example, in the dependent type theory of Coq,
  we have the type judgement relation $\Gamma \vdash x : w$, which means ``$x$ is of
  type $w$ under context $\Gamma$.'' There is no direct equivalent of this
  typing relation in untyped FOL. To express Coq type judgements in untyped FOL, 
  CoqHammer first introduces the uninterpreted FOL predicate $T(u^*, a^*)$, where
  $u^*$ and $a^*$ are FOL terms translated from Coq term $u$ and \textit{atomic} Coq type $a$
  (here \textit{atomic} roughly means that $a$ cannot be
  further decomposed by the translation procedure of CoqHammer). Then, a recursive function
  $\mathcal{G}_\Gamma(u, w)$ is defined on the Coq context $\Gamma$ and the Coq terms $u, w$.
  The function $\mathcal{G}_\Gamma(u, w)$ translates the typing relation $\Gamma \vdash u : w$ into an untyped FOL formula,
  in which the $T$ predicate is used to express type judgements involving atomic types.

  Encoding-based translation has the advantage of being (almost) complete
  and straightforward to compute. However, certain features of the more expressive
  logical system must be omitted to produce translation results of reasonable size,
  which sacrifices soundness \cite{Czajka2018HammerFC}. Moreover, even with this tradeoff,
  the translated expression is usually much larger than the original expression.

  The idea behind monomorphization is the fact that the proof of many propositions in the more expressive
  system can \textit{essentially} be conducted in the less expressive system. For example,
  in polymorphic HOL, given
  \begin{enumerate}
    \item the list map function $\mathsf{List.map} : \forall (\alpha \ \beta : \mathsf{U}_1). (\alpha \to \beta) \to \mathsf{List} \ \alpha \to \mathsf{List} \ \beta$
    \item two lists of natural numbers $xs \ ys : \mathsf{List} \ \mathbb{N}$ and two functions $f \ g : \mathbb{N} \to \mathbb{N}$
    \item the premise $xs = ys \land f = g$
  \end{enumerate}
  The equality
  \begin{equation}\label{lmapphol}
    \mathsf{List.map} \ \mathbb{N} \ \mathbb{N} \ f \ xs = \mathsf{List.map} \ \mathbb{N} \ \mathbb{N} \ g \ ys
  \end{equation}
  is provable using two rewrites $xs \Rightarrow ys, f \Rightarrow g$. The crucial observation is that, although $\textsf{List.map}$ is polymorphic, the term
  $\mathsf{List.map} \ \mathbb{N} \ \mathbb{N}$ as a whole behaves just like a monomorphic function,
  and therefore the rewrites can essentially be performed in monomorphic HOL. More formally,
  the formula \eqref{lmapphol} is the image of the monomorphic HOL formula $h \ f^* \ xs^* = h \ g^* \ ys^*$
  under the inter-logical-system ``substitution''
  $$\sigma := \{h \mapsto \mathsf{List.map} \ \mathbb{N} \ \mathbb{N},
    f^* \mapsto f, g^* \mapsto g, xs^* \mapsto xs, ys^* \mapsto ys\},$$
  and the rewrites $xs \Rightarrow ys, f \Rightarrow g$ in polymorphic HOL are just manifestations of the
  rewrites $xs^* \Rightarrow ys^*, f^* \Rightarrow g^*$ in monomorphic HOL.

  Monomorphization is sound, produces small translation results, and preserves
  term structures during translation. However, monomorphization is incomplete,
  since it is not always possible to find an appropriate formula in the less
  expressive logical system that reflects the original formula in the more expressive logical system.


  The difference in output size between encoding-based translation
  and mono\-morph\-ization is particularly pronounced in Lean 4 (see
  \refappendix{sstrans} for experimental results). As mentioned in Sect. \ref{sectlean},
  a user-facing Lean 4 syntax as simple as $x + y$ corresponds to
  the complicated expression \texttt{HAdd.hAdd $\alpha \ \beta \ \gamma$ inst x y}, where \texttt{inst}
  itself is a potentially large expression synthesized by typeclass inference. The result
  of encoding-based translation on the above expression is larger than the expression itself.
  On the other hand, our monomorphization procedure will translate the above expression
  into a much smaller one: $h \ x^* \ y^*$, where \texttt{HAdd.hAdd $\alpha \ \beta \ \gamma$ inst}
  is ``absorbed'' into $h$ via the inter-logical-system ``substitution.''

\section{An Overview of Lean-auto} \label{motex}

As mentioned before, the translation workflow of Lean-auto consists of four stages:
preprocessing, and the three stages of monomorphization: quantifier instantiation, $\lambda_\to^*$ abstraction,
and universe lifting.

Roughly speaking, the preprocessing stage translates Lean 4 into dependent type theory ($\lambda C$), which involves
handling definitional equality and inductive types. It also performs minimal
transformation on the translated $\lambda C$ problem. This includes introducing all leading
$\forall$ quantifiers into the context and applying proof by
contradiction.\footnote{Proof by contradiction introduces the negation of the
goal into the context and replaces the goal with $\bot$.} Then, everything in the context
with type \texttt{Prop} is collected by Lean-auto and added to
the list of premises. Sect.~\ref{sectprep} contains a more detailed discussion of preprocessing.

Universe lifting translates $\text{HOL}^*$ into $\text{HOL}$. Conceptually, it
erases all the universe level information in the input expression. However,
implementing it as a sound translation procedure in Lean 4 requires a decent amount
of work. Details about universe lifting are given in \refappendix{appulift}.

In Sect.~\ref{exabst} and~\ref{exinst}, we provide intuition for the
$\lambda_\to$ abstraction and quantifier instantiation stages by giving a
simplified explanation of their execution on an example. Sect.~\ref{exqdet}
gives a high-level discussion of some of the challenges posed by dependent
type theory and Lean 4.

\subsection{$\lambda_\to^*$ Abstraction} \label{exabst}

\begin{figure}
  \begin{CenteredBox}
    \lstinputlisting[style=leanHH]{LeanCode/reverse_map_pretty.inp}
  \end{CenteredBox}
  \caption{Lean 4 proof state of a problem involving \textbf{List}.} \label{leanlistpretty}
\end{figure}

The Lean 4 proof state of the problem we will consider is shown in Figure \ref{leanlistpretty}.
The hypotheses (premises) and variable declarations are displayed before $\vdash$, while the goal comes after $\vdash$.
\vmaprev states that \vmap commutes with \vrev, and
\vrevrev states that \vrev is the inverse function of itself.

\begin{figure}
  \begin{CenteredBox}
    \lstinputlisting[style=leanHH]{LeanCode/reverse_map_explicit.inp}
  \end{CenteredBox}
  \caption{Lean 4 proof state after variable introduction and application of
    proof by contradiction, with implicit arguments displayed.
    Note that the equality sign in Figure \ref{leanlistpretty} is syntactic sugar for
    the polymorphic function \vEq shown here.}
  \label{leanlistexplicit}
\end{figure}

Since the problem is already in the $\lambda C$ fragment of Lean, the only
preprocessing step required is to introduce the universal quantifiers
appearing in the goal into the context and then apply proof by contradiction. The resulting proof state
is shown in Figure \ref{leanlistexplicit}.
For clarity, we have displayed the implicit arguments of all the functions.

First, we focus on translating \vneggoal into $\text{HOL}^*$. Following the
discussion in Sect. \ref{subencmon}, we would like to find a $\text{HOL}^*$ formula $\varphi$
and a ``substitution'' $\sigma$ such that the image of $\varphi$ under $\sigma$ is \vneggoal.
We also want the problem to be provable after the translation, so $\varphi$
should preserve as much information in \vneggoal as possible.

Three polymorphic functions: \vEq, \vmap and \vrev, occur in \vneggoal.
Although these functions are polymorphic, instances of these functions
with their dependent arguments instantiated behave like $\text{HOL}^*$ variables
(we will refer to such instances as $\mathit{HOL}^*$ \textit{instances}).
The type constructor \texttt{List} is also not allowed in $\text{HOL}^*$, but
\texttt{List A} and \texttt{List B} behave just like $\text{HOL}^*$ type variables
(we will refer to expressions such as \texttt{List A} and \texttt{List B} as $\mathit{HOL}^*$ \textit{type instances}).
Therefore, we can choose
$$\begin{aligned}
  \varphi := & \ \neg (\mathsf{EqLB} \ (\mathsf{rB} \ (\mathsf{mAB} \ f^* \ (\mathsf{rA} \ \mathit{xs}^*))) \ (\mathsf{mAB} \ f^* \ \mathit{xs}^*)) \\
  \sigma := & \ \{\mathsf{EqLB} \mapsto \texttt{@Eq (List B)}, \ \ \mathsf{mAB} \mapsto \texttt{@map A B}, \\
            & \ \ \mathsf{rA} \mapsto \texttt{@reverse A}, \ \ \mathsf{rB} \mapsto \texttt{@reverse B}, \ \ f^* \mapsto \texttt{f}, \ \ \mathit{xs}^* \mapsto \texttt{xs} \\
            & \ \ \mathsf{LA} \to \texttt{List A}, \ \ \mathsf{LB} \to \texttt{List B}, \ \ \mathsf{A} \to \texttt{A}, \ \ \mathsf{B} \to \texttt{B}\}, \\
\end{aligned}$$
where $\mathsf{EqLB} : \mathsf{LB} \to \mathsf{LB} \to \mathsf{Bool}, \
\mathsf{rA} : \mathsf{LA} \to \mathsf{LA}, \ \mathsf{rB} : \mathsf{LB} \to \mathsf{LB}, \ 
\mathsf{mAB} : (\mathsf{A} \to \mathsf{B}) \to \mathsf{LA} \to \mathsf{LB}, \
f^* : \mathsf{A} \to \mathsf{B}, \ \mathit{xs}^* : \mathsf{LA}$.

In a sense, the $\text{HOL}^*$ (type) instances are ``abstracted'' to $\text{HOL}^*$ (type)
variables. Note that the logical rules of $\text{HOL}^*$ are not relevant to this abstraction procedure---only
the term calculus $\lambda_\to^*$ is involved. Therefore, we name this procedure $\lambda_\to^*$ \textit{abstraction}.

However, $\lambda_\to^*$ abstraction is not directly applicable to \vmaprev and
\vrevrev, because dependent arguments of polymorphic functions occurring in them contain
universally quantified variables. Naturally, we would like to instantiate the quantifiers to make $\lambda_\to^*$
abstraction applicable.

\subsection{Quantifier Instantiation} \label{exinst}

To understand how quantifiers should be instantiated, we investigate how they would
be instantiated if we were to prove the goal manually. There are at least two ways we can proceed. We can
either first use \texttt{@map\_reverse A B} to swap the outer \texttt{reverse} with \texttt{map}, then
use \texttt{@reverse\_reverse A} to eliminate \texttt{reverse}; or, first use
\texttt{@map\_reverse A B} to swap the inner \texttt{reverse} with \texttt{map}, then
use \texttt{@reverse\_reverse B} to eliminate \texttt{reverse}. Notice how the dependent
arguments of a function $f$\footnote{In the context of this problem, $f$
could be \texttt{reverse} or \texttt{map}.} in the instantiated hypotheses
match the dependent arguments of $f$ in the $\text{HOL}^*$ instances of $f$ in the goal.

Quantifier instantiation in Lean-auto's monomorphization procedure is based
on a matching procedure that reflects the above observation. Given a set of formulas $S$,
the matching procedure first computes the set $M$ of $\text{HOL}^*$ instances occurring
in $S$ and then matches expressions in $S$ with elements of $M$. For example,
given $S=$\texttt{ \{@map\_reverse, @reverse\_reverse, neg\_goal\}}, the set $M$ is
\texttt{\{@reverse A, @reverse B, @map A B, @Eq (List B)\}}, all of whose
elements are collected from \vneggoal.
The matching procedure will preform the following matchings:

\begin{enumerate}
  \item \texttt{@Eq (List $\beta$)} in \vmaprev with \texttt{@Eq (List B)},
    which produces \texttt{fun $\alpha$ => @map\_reverse $\alpha$ B}
  \item \texttt{@map $\alpha$ $\beta$} in \vmaprev with \texttt{@map A B},
    which produces \texttt{@map\_reverse A B}
  \item \texttt{@reverse $\alpha$} in \vmaprev with \texttt{@reverse A} and \texttt{@reverse B},
    which produces \texttt{@map\_reverse A} and \texttt{@map\_reverse B}
  \item \texttt{@reverse $\beta$} in \vmaprev with \texttt{@reverse A} and \texttt{@reverse B},
    which produces \texttt{fun $\alpha$ => @map\_reverse $\alpha$ A} and \texttt{fun $\alpha$ => @map\_reverse $\alpha$ B}
  \item \texttt{@Eq (List $\alpha$)} in \vrevrev with \texttt{@Eq (List B)},
    which produces \texttt{@reverse\_reverse B}
  \item \texttt{@reverse $\alpha$} in \vrevrev with \texttt{@reverse A} and \texttt{@reverse B},
    which produces \texttt{@reverse\_reverse A} and \texttt{@reverse\_reverse B}
\end{enumerate}

Since \texttt{@reverse\_reverse A}, \texttt{@reverse\_reverse B} and \texttt{@map\_reverse A B}
are present, the instances produced are already sufficient for proving the goal.
But generally speaking, newly generated hypothesis instances and $\text{HOL}^*$
instances\footnote{New $\text{HOL}^*$ instances are collected from newly generated hypothesis instances.}
can still be matched with each other (and existing ones) to produce new useful results.
Hence, Lean-auto's monomorphization uses a saturation loop which
repeats the matching procedure until either no new instances can be produced or a prescribed
threshold is reached.

\subsection{Challenges Related to Dependent Type Theory and Lean 4} \label{exqdet}

\subsubsection{Dependent Arguments are Dynamic:}

\begin{figure}
  \begin{CenteredBox}
    \begin{lstlisting}[style=leanHH]
|!@DFunLike.coe!| : {F : Type (max u_1 u_5)}
  → {α : outParam (Type u_1)} → {β : outParam (α → Type u_5)}
  → [self : DFunLike F α β] → F → (a : α) → β a

@DFunLike.coe (A₀ →+ B₀) A₀ (fun x => B₀) AddMonoidHom.instFunLike f₀ a 
    \end{lstlisting}
  \end{CenteredBox}
  \caption{The function \texttt{DFunLike.coe} from MathLib4 and an expression
  containing it.}
  \label{dfun}
\end{figure}

  In $\lambda C$, whether an argument is dependent depends on how previous arguments
are instantiated. Consider the example shown in Figure \ref{dfun}. Here \texttt{DFunLike.coe}
is a low-level utility which turns a function-like object into its corresponding
function. In the signature of \texttt{DFunLike.coe}, the return type
\texttt{$\beta$ a} depends on the last argument \texttt{a : $\alpha$}. However, when
$\beta$ is instantiated with \texttt{fun x => B$_0$}, as in the
expression at the bottom of Figure \ref{dfun}, the return type \texttt{$\beta$ a} reduces to \texttt{B$_0$},
which no longer depends on the last argument. Our monomorphization procedure takes preceding arguments into
consideration when determining whether an argument is dependent.

\subsubsection{$\text{HOL}^*$ Instances are Dynamic:}
  In $\lambda C$, whether an expression is a $\text{HOL}^*$ instance is also context-dependent.
Consider the simple expression \texttt{@reverse = @reverse}, where \texttt{reverse}
is the same as in Figure \ref{leanlistexplicit}. Although \texttt{@reverse} is polymorphic,
it \textit{behaves like} a $\text{HOL}^*$ variable in \texttt{@reverse = @reverse}. More formally,
let
$$\begin{aligned}
\varphi &:= (f = f) \\
\sigma  &:= \{f \mapsto \texttt{@reverse}, \ \ \gamma \mapsto \texttt{(} \forall \ \texttt{\{}\alpha \ \beta \ : \ \texttt{Type\}}, \ 
  \texttt{List} \ \alpha \to \texttt{List} \ \beta \texttt{)}\}
\end{aligned}$$
where $f : \gamma$. Then, \texttt{@reverse = @reverse} is the image of the $\text{HOL}^*$ formula $\varphi$
under $\sigma$. Intuitively, the dependent arguments in the type of \texttt{reverse} can be ``absorbed''
into the $\text{HOL}^*$ type variable $\gamma$ because neither of the dependent arguments of \texttt{reverse} are present.
Our monomorphization procedure is able to detect such context-dependent $\text{HOL}^*$ instances.

\subsubsection{Definitional Equality:}
  As mentioned before, two syntactically different expressions can be definitionally
equal in Lean 4. Somehow, we need to account for this in Lean-auto's translation. Theoretically speaking, reducing all expressions to normal forms would solve
the problem to a large extent. However, full reduction is prohibitively
expensive on complex expressions in real-life Lean 4 projects, and the reduced expressions
could be much larger than the original expressions.\footnote{\refappendix{ssred} presents
a set of experiments that demonstrate these issues.}
Moreover, the reduced expressions might contain complex dependent types that Lean-auto cannot handle.
Therefore, we devise several other methods to address definitional equality.
  
  In Lean-auto, there are three separate occasions where definitional equality
has to be addressed.

  First, when a symbol is defined in Lean 4, (potentially multiple) \textit{equational theorems} that
reflect the definitional equalities related to the symbol are automatically generated.
Lean-auto can be configured to collect these equational theorems and to use
them to perform reduction and unfold constants (see Sect. \ref{sectprep}).

  Second, during $\lambda_\to^*$ abstraction, we would like $\text{HOL}^*$ instances
that are syntactically different but definitionally equal to be abstracted to the
same $\text{HOL}^*$ variable. Our $\lambda_\to^*$ abstraction algorithm keeps a
set $H$ of mutually definitionally unequal $\text{HOL}^*$ instances. Whenever a new $\text{HOL}^*$
instance $t$ is found, we test definitional equality of $t$ with elements of $H$
using \texttt{isDefEq}. Since \texttt{isDefEq} is expensive,
a \textit{fingerprint}\footnote{Roughly speaking, a \textit{fingerprint} of an expression
is a summary of the expression's syntax.} is computed for each $\text{HOL}^*$ instance, and fingerprint equality
is tested before calling \texttt{isDefEq}.

  Finally, even if two $\text{HOL}^*$ instances are definitionally unequal, there could still be nontrivial
relations between them. For example, if $f : \mathbb{N} \to \mathbb{N}$ is defined as
$f := \lambda (x : \mathbb{N}). g \ x \ x$, the equation $\forall (x : \mathbb{N}). f \ x = g \ x \ x$
would be a nontrivial relationship between $f$ and $g$. Lean-auto will attempt
to generate such equational theorems during quantifier instantiation. For each pair of $\text{HOL}^*$ instances $c_1, c_2$,
Lean-auto attempts to find terms $t_1, \dots, t_n$ such that $\lambda x_1 \dots x_m. \ c_1 \ y_1 \ \dots \ y_l = c_2 \ t_1 \ \dots \ t_n$,
where $x_1, \dots, x_m$ are variables occurring in $t_1, \dots, t_n$, and
$\{y_1, \dots, y_l\}$ is a subset of $\{x_1, \dots, x_m\}$.

\subsubsection{Absorbing Typeclass Instance Arguments:}
  In Lean 4, many functions have instance arguments that are not dependent arguments.
An example is the fourth argument of \texttt{HAdd.hAdd} mentioned in Sect.~\ref{sectlean}.
Since instance arguments are usually large expressions synthesized by Lean 4's typeclass
inference algorithm, translating them can result in large $\text{HOL}^*$ terms.
Lean-auto's implementation attempts to absorb typeclass arguments into
$\text{HOL}^*$ variables by instantiating typeclass instance quantifiers
and requiring $\text{HOL}^*$ instances to take typeclass arguments with them.\footnote{For
simplicity, this detail is not discussed in \refappendix{labstalgo} and~\ref{appinstant}.}

\section{$\lambda_\to^*$ Abstraction}\label{sectabst}

In this section, we discuss the $\lambda_\to^*$ abstraction procedure,
the second step of Lean-auto's monomorphization. Note that universe lifting, the first step,
is presented in \refappendix{appulift}. As mentioned before, we use $\Gamma \vdash? p$ to
represent the \emph{problem} of finding a proof of $p$ under context $\Gamma$.

The goal of $\lambda_\to^*$ abstraction is to translate essentially higher-order
problems (EHOPs) into $\text{HOL}^*$. Intuitively, a $\lambda C$ problem $\Gamma \vdash? p$ is EHOP iff there exists
a provable $\text{HOL}^*$ problem $\Gamma' \vdash? p'$ and a ``substitution'' $\sigma$ such that $\Gamma \vdash? p$ is the
image of $\Gamma' \vdash? p'$ under $\sigma$. Given $\Gamma \vdash? p$,
$\lambda_\to^*$ abstraction attempts to find such a triple $(\Gamma', p', \sigma)$. The formal definition of EHOP relies on
the concept of $\text{HOL}^*$-to-$\lambda C$ substitution and canonical embedding (see \refappendix{fmehop}).

\begin{definition}\label{EHOPInMain} A $\lambda C$ problem $\Gamma \vdash?  p$ is essentially higher-order provable (EHOP)
  iff there exists a provable $\text{HOL}^*$ problem $\Gamma' \vdash? p'$ and a substitution
  $(\pi^*(\Gamma'), \Gamma, \sigma)$ such that $p \cong \overline{\sigma}(\pi^*(p'))$.
\end{definition}

As a practical algorithm, Lean-auto's $\lambda_\to^*$ abstraction only works
on input problems $\Gamma \vdash? p$ where $p$ is a $\lambda C$ term structurally similar to $\text{HOL}^*$ terms. We
call such $\lambda C$ terms \textit{quasi-monomorphic terms}. They serve as the
intermediate representation between quantifier instantiation and $\lambda_\to^*$
abstraction. We use $\mathsf{QMono}(\Gamma; B, t)$ to represent ``$t$ is quasi-monomorphic
under context $\Gamma$, with variables in $B$ being bound variables.''\footnote{See \refappendix{labstalgo} for
the formal definition of $\mathsf{QMono}$.}
$\mathsf{QMono}$ has the following properties:

\begin{enumerate}
  \item Canonically embedded $\text{HOL}^*$ terms are $\mathsf{QMono}$.
  \item In $\mathsf{QMono}$ terms, proofs cannot be bound by $\lambda$ or dependent $\forall$ binders.
  \item A dependently typed free variable does not break the $\mathsf{QMono}$ property iff
    its dependent arguments do not contain bound variables.
  \item A dependently typed bound variable does not break the $\mathsf{QMono}$ property iff
    its dependent arguments are not instantiated.
  \item Except for within type declarations of bound variables, bodies of $\forall$ abstractions must be propositions.
\end{enumerate}

The $\lambda_\to^*$ abstraction algorithm itself is conceptually simple, but
it involves many technical details because it must handle all possible
features of $\mathsf{QMono}$ terms.\footnote{See \refappendix{labstalgo} for details of the algorithm.}
Given a $\lambda C$ problem $\Gamma \vdash? p$,
the $\lambda_\to^*$ abstraction algorithm traverses $p$ and turns $\text{HOL}^*$ instances it finds into $\text{HOL}^*$
variables. The ``substitution'' it returns is the map from $\text{HOL}^*$ variables to their
corresponding $\text{HOL}^*$ instances.

\section{Quantifier Instantiation}\label{sectinst}
In this section, we discuss the first step of Lean-auto's monomorphization : quantifier instantiation.
Given a context $\Gamma$ and a list of hypotheses $h_1 : t_1, \dots, h_n : t_n$, the quantifier
instantiation procedure of Lean-auto attempts to instantiate quantifiers in
$t_1, \dots, t_n$ to obtain terms suitable for $\lambda_\to^*$ abstraction
(i.e., to obtain terms that satisfy the $\mathsf{QMono}$ predicate).

As mentioned in Sect. \ref{exinst}, quantifier instantiation
is based on a saturation loop which matches $\text{HOL}^*$ instances of functions
with subterms of hypothesis instances. There are two main algorithms
in quantifier instantiation: $\textsf{matchInst}$ and $\textsf{saturate}$.
The $\textsf{matchInst}$ algorithm is responsible for matching $\text{HOL}^*$
instances with subterms of hypothesis instances to generate new
hypothesis instances, and the $\textsf{saturate}$
algorithm is the main saturation loop. The $\textsf{saturate}$ algorithm
is given in Algorithm \ref{saturateInMain}.\footnote{See \refappendix{appinstant}
for the $\textsf{matchInst}$ algorithm and details of the $\textsf{saturate}$ algorithm.}

\begin{algorithm}\label{saturateInMain}
  \DontPrintSemicolon
  \SetNoFillComment
  \SetKwFunction{matchOnePairFun}{\textsf{matchOnePair}}
  \SetKwFunction{saturateFun}{\textsf{saturate}}
  \caption{Main saturation loop of quantifier instantiation}
  \Fn{\saturateFun{$\Gamma; H, \vmaxInsts$}}{
    \Input{$\lambda C$ context $\Gamma$, list of $\lambda C$ terms $H$, and threshold $\vmaxInsts$}
    \Output{A list of $\lambda C$ terms}
    $\vhi := H$ \tcc*[h]{A list of hypothesis instances} \;
    $\vci := \mathsf{List.empty}()$ \tcc*[h]{A list of constant instances} \;
    \tcc*[h]{A queue of active constant and hypothesis instances} \;
    $\vactive := \mathsf{Queue.empty}()$\;
    \For{h : H}{
      $\vhi.\mathsf{push}((0, h))$ \;
      \For{$c : \mathsf{holInsts}(\Gamma; \emptyset, h)$}
        {$\vci.\mathsf{push}(c)$; $\vactive.\mathsf{push}((1, c))$}
    }
    \While{$! \ \vactive.\mathsf{empty}()$}{
      \lIf{$\vhi.\mathsf{size}() + \vci.\mathsf{size}() > \vmaxInsts$}{\Break}
      $(\vtype, \vfront) := \vactive.\mathsf{front}()$ \;
      $\vactive.\mathsf{popFront}()$ \;
      \eIf{$\vtype = 0$}{
        $\vprevci := \vci.\mathsf{copy}()$ \;
        \For{$c : \vprevci$}{
          $\mathsf{matchOnePair}(c, \vfront, \vci, \vhi, \vactive)$
        }
      }{
        $\vprevhi := \vhi.\mathsf{copy}()$ \;
        \For{$h : \vprevhi$}{
          $\mathsf{matchOnePair}(\vfront, h, \vci, \vhi, \vactive)$
        }
      }
    }
    $\vmonohi := \mathsf{List.empty}()$ \;
    \For{$h : \vhi$}{
      \lIf{$\mathsf{QMono}(\Gamma; \emptyset, h)$}{$\vmonohi.\mathsf{push}(h)$}
    }
    \Return $\vmonohi$
  }
  \;
  \Fn{\matchOnePairFun{$c, h, \vci, \vhi, \vactive$}}{
    $\vnewhi := \mathsf{matchInst}(\Gamma; c, h)$ \;
    \For{$\vnh : \vnewhi$}{
      \lIf{$\vnh \in \vhi$}{\Continue}
      $\vhi.\mathsf{push}(\vnh); \vactive.\mathsf{push}((0, \vnh))$ \;
      $\vnewci := \mathsf{holInsts}(\Gamma; \emptyset, \vnh)$ \;
      \For{$\vnc : \vnewci$}{
        \lIf{$\vnc \in \vci$}{\Continue}
        $\vci.\mathsf{push}(\vnc); \vactive.\mathsf{push}((1, \vnc))$ \;
      }
    }
  }
\end{algorithm}

The $\textsf{saturate}$ algorithm maintains a queue of active $\text{HOL}^*$ instances
and hypothesis instances, denoted as $\mathit{active}$. In each loop, an
element is popped from $\mathit{active}$. If it is a $\text{HOL}^*$ instance,
it is matched with all existing hypothesis instances; if it is a hypothesis instance,
it is matched with all existing $\text{HOL}^*$ instances. For each newly
generated hypothesis instance $h$, both $h$ and all the $\text{HOL}^*$
instances occurring in $h$ are added to $\mathit{active}$.

The $\textsf{saturate}$ algorithm also handles equational theorem generation of $\text{HOL}^*$
instances.\footnote{For simplicity, equational theorem generation is not shown
in Algorithm \ref{saturateInMain}.} For each new $\text{HOL}^*$ instance $c$, we generate
equational theorems between $c$ and existing $\text{HOL}^*$ instances.
The newly generated equational theorems are added to the set of existing hypothesis
instances so that they can participate in later matchings.

\section{Preprocessing}\label{sectprep}

  Preprocessing translates Lean 4 into dependent type theory, with
  the exception that part of definitional equality handling happens during monomorphization.
  In this section, we list the major steps of Lean-auto's preprocessing.


\subsubsection{Definitional Equality:}

  To handle definitional equality in Lean 4, Lean-auto partially reduces the input
  expressions, using Lean 4's built-in \texttt{Meta.transform} and \texttt{Meta.whnf}.
  This includes $\beta \zeta \eta \iota$ reduction and part of $\delta$
  reduction. In Lean 4, $\delta$ reduction is controlled by a reducibility setting,
  and Lean-auto allows users to specify the reducibility setting used by the preprocessor.
  For finer-grained control over which constants should be unfolded, Lean-auto allows
  users to supply a \textit{definitional equality instruction} $d[g_1, \dots, g_n]$
  and an \textit{unfolding instruction} $u[f_1, \dots, f_n]$, where $f_i, g_i$
  are constants.

  For the definitional equality instruction, Lean-auto automatically collects all the definitional
  equalities associated with $g_1, \dots, g_n$ and combines them with the premises
  supplied by the user. For the unfolding instruction, Lean-auto recursively unfolds $f_1, \dots, f_n$.
  To ensure termination, Lean-auto performs a topological sort on $f_1, \dots, f_n$,
  where $f_i$ is sorted before $f_j$ if $f_j$ occurs in the definition of $f_i$.
  Lean-auto will fail if there is a cyclic dependency between $f_1, \dots, f_n$.

  The preprocessing stage also performs equational theorem generation. It collects
  all maximal subexpressions of the input that do not contain logical symbols,
  and generates equational theorems between them. These equational theorems are
  also added to the list of premises.

\subsubsection{Inductive Types:}

  Currently, Lean-auto supports polymorphic, nested, and mutual inductive types
  when SMT solvers are used as the backend ATP. For other ATPs or unsupported
  inductive types, users can always manually supply the properties related
  to the inductive types as a workaround.

  The translation procedure for inductive types resembles monomorphization. For a polymorphic inductive type
  $T : \forall (\alpha_1 : \mathsf{U}_{\ell_1}) \ \dots \ (\alpha_n : \mathsf{U}_{\ell_n}). \mathsf{U}_\ell$,
  the translation attempts to find all relevant instances $T \ \alpha_1 \ \dots \ \alpha_n$,
  and translates each instance to a monomorphic inductive type in the SMT solver.
  For mutual and nested inductive types, the type of their constructors might contain
  other inductive types not occurring in the input premises. These inductive
  types will be recursively collected and monomorphized by the translation procedure.

\subsubsection{Quantifier Introduction and Proof by Contradiction:}

  To prepare for monomorphization, Lean-auto performs quantifier introduction on the goal and
  applies proof by contradiction. Suppose the goal is $\forall (x_1 : \alpha_1) \ \dots \ (x_n : \alpha_n). \beta$. 
  Quantifier introduction will introduce $x_1 : \alpha_1, \dots, x_n : \alpha_n$ into the context and replace
  the goal with $\beta$. Then, proof by contradiction will introduce the negation of the
  the goal $h : \neg \beta$ into the context and replace the goal with $\bot$.

\section{Experiments}\label{sectexpr}

  We evaluate Lean-auto and existing tools on user-declared theorems in
  Mathlib4,\footnote{Commit 29f9a66d622d9bab7f419120e22bb0d2598676ab.}
  using version \texttt{leanprover/lean4:v4.15.0} of Lean 4.
  A Lean 4 constant is considered a user-declared theorem if it is marked as a
  theorem, is declared somewhere in a \texttt{.lean} file,\footnote{We use
  \texttt{Lean.findDeclarationRanges?} to test whether a theorem is
  declared in a \texttt{.lean} file.} and is not a projection function.
  Due to technical reasons,\footnote{Refer to \refappendix{desetup}.}
  27762 of the 176904 user-declared theorems
  are excluded in our evaluation. Therefore, our benchmark set
  consists of 149142 theorems (problems). Evaluation is conducted on an Amazon
  EC2 \texttt{c5ad.16xlarge} instance with 64 CPU cores and 128GB memory. Each theorem
  is given a time limit of 10 seconds. Technical details of our
  experimental setup are discussed in \refappendix{desetup}.
  
  Since our primary goal is to evaluate Lean-auto's translation procedure,
  we do not use premise selection in our evaluation. Instead, for each theorem $T$ used
  in the evaluation, we collect all the theorems used in $T$'s human proof, and
  send them to Lean-auto and existing tools as premises.
  This simple procedure emulates an ideal premise selection algorithm.

  Three types of ATPs are used together with Lean-auto:
  \begin{enumerate}
    \item Native provers, or ATPs implemented in Lean 4 itself. Currently, the only general-purpose
      native prover supported by Lean-auto is Duper \cite{DuperPaper}. Although Duper can accept Lean 4 problems directly,
      it has difficulty handling Lean 4 features such as typeclasses and
      definitional equality. Our small-scale experiment shows that Duper only works
      well when used as a backend of Lean-auto.\footnote{Refer to \refappendix{ssduper}.}
      Considering that we also encountered technical issues when we attempted full-scale evaluation
      using Duper without Lean-auto, we decided to not include ``Duper without Lean-auto''
      in our evaluation.
    \item TPTP solvers. We chose Zipperposition, a higher-order superposition prover.
      Lean-auto sends problems to Zipperposition in TPTP TH0 format.
    \item SMT solvers. For this category, we chose Z3 and CVC5. Since SMT solvers
      still don't fully support HOL, we implemented a slightly modified version of the monomorphization
      procedure which generates FOL output. The modification introduces some extra incompleteness
      to the translation, which might have given Z3 and CVC5 a slight disadvantage.
  \end{enumerate}

  Currently, Lean-auto only supports proof reconstruction for native provers,
  utilizing a verified checker implemented in Lean-auto. The
  independent ongoing project Lean-smt\footnote{GitHub link:
  https://github.com/ufmg-smite/lean-smt} aims
  to support SMT proof reconstruction in the future.

  We compare Lean-auto with the following existing tools:
  \begin{enumerate}
    \item Lean 4's built-in tactic \texttt{rfl}. The \texttt{rfl} tactic
      proves theorems of the form \texttt{lhs = rhs} where \texttt{lhs}
      is definitionally equal to \texttt{rhs}. Note that \texttt{rfl}
      does not accept premises.
    \item Lean 4's built-in tactic \texttt{simp\_all}. Similar to Lean-auto,
      \texttt{simp\_all} accepts a list of user-provided premises. In Lean 4, users can
      tag theorems with the ``simp'' attribute. The \texttt{simp\_all} tactic
      succeeds on a decent portion of Mathlib4 even if we do not supply it with premises,
      because it has access to the theorems tagged with the ``simp'' attribute,
      and will use these theorems to simplify the input expressions.
      Therefore, we evaluate \texttt{simp\_all} in two different ways: with premises (``simp\_all''
      in Figure \ref{figevalcmp}) and without premises (``simp\_all - p'' in Figure \ref{figevalcmp}).
    \item The rule-based proof search procedure Aesop \cite{AesopPaper}. Since Aesop
      invokes the \texttt{simp\_all} tactic during its execution, it also
      benefits from theorems tagged with ``simp.'' We
      evaluate Aesop in two different ways: with premises\footnote{Specifically, for
      each premise $p$, we add \texttt{(add unsafe p)} to the \texttt{aesop} invocation} (``Aesop''
      in Figure \ref{figevalcmp}) and without premises (``Aesop - p'' in Figure \ref{figevalcmp}).
  \end{enumerate}

  Due to limited time and resources, this work does not compare Lean-auto with
  hammers implemented in other ITPs. Differences in logical systems make it
  very difficult to translate datasets between ITPs. For example, even though
  Lean 4 and Coq are both based on dependent type theory, they extend
  dependent type theory in different ways.\footnote{For example, Cumulative Universe Levels in Coq
  and Quotient Types in Lean 4.} Translation procedures between Lean 4 and
  Coq would need to modify expressions in nontrivial ways, which would cause
  typechecking and definitional equality issues.

  \begin{figure}
  \begin{center}\begin{tabular}{| l | K{6em} K{7em} K{7em} |}
    \hline
                      & Solved & Unique Solves  & Avg Time(ms) \\ \hline
    rfl               & 19896  & 35             & 5.7          \\
    simp\_all - p     & 9833   &                & 19.8         \\
    simp\_all         & 28096  &                & 52.0         \\
    simp\_all VBS     & 28204  & 3035           & 44.4         \\
    Aesop - p         & 33762  &                & 61.3         \\
    Aesop             & 47060  &                & 93.5         \\
    Aesop VBS         & 48413  & 6512           & 92.2         \\
    Lean-auto + Duper & 54570  &                & 1092.5       \\
    Lean-auto + Z3    & 54210  &                & 863.5        \\
    Lean-auto + CVC5  & 54316  &                & 808.0        \\
    Lean-auto + Zipper. &54817 &                & 774.9        \\
    Lean-auto VBS     & 61906  & 22020          & 756.8        \\ \hline
    Overall VBS       & 79396  &                & 314.7        \\ \hline
  \end{tabular}\end{center}
  \caption{Comparison with existing tools. Our benchmark set contains 149135 problems.} \label{figevalcmp}
  \end{figure}

  \begin{figure}
  \begin{center}
    \includegraphics[width=120mm]{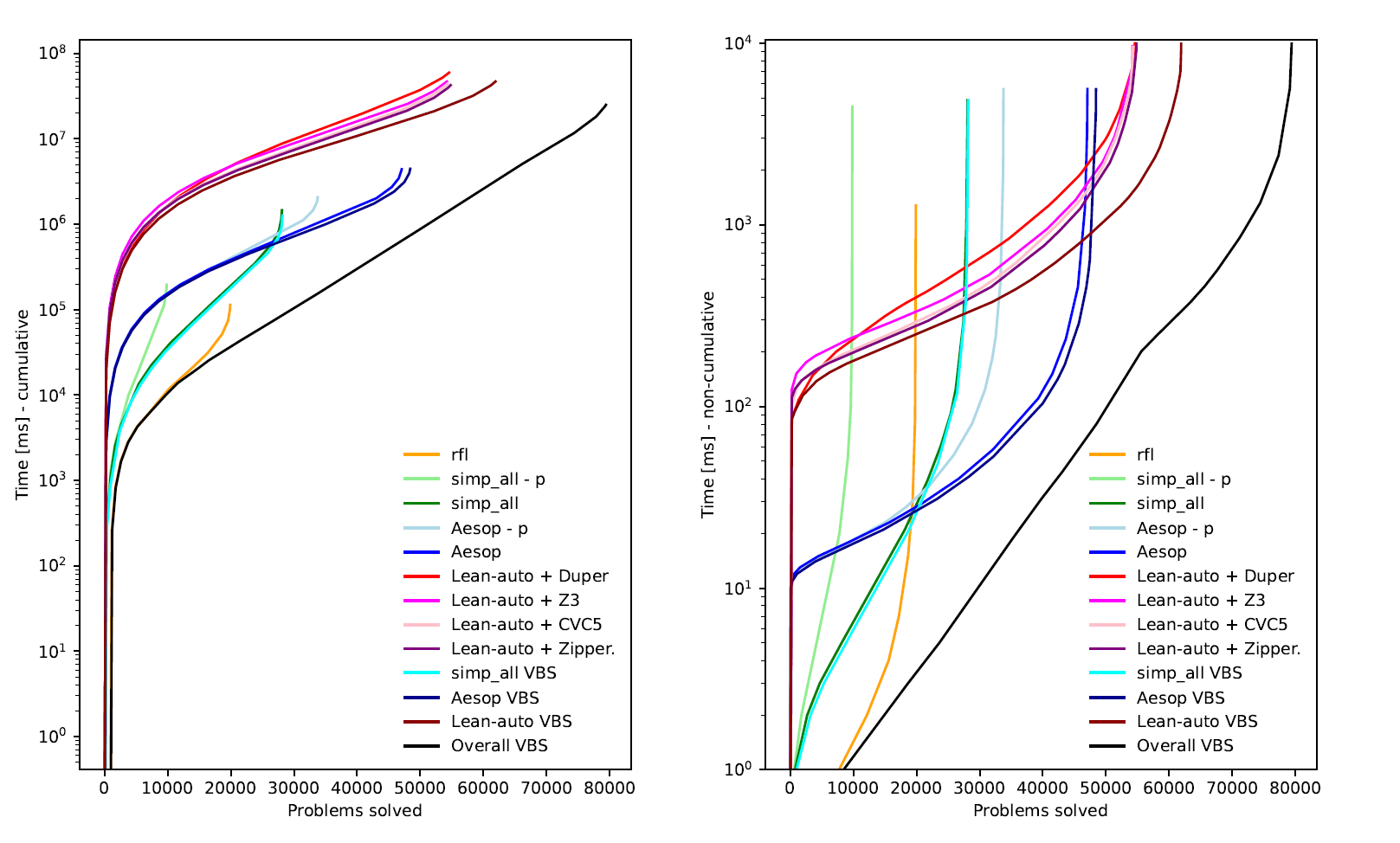}
  \end{center}
  \caption{\#Solved - Cumulative Time plot (left) and \#Solved - Time plot (right)}
  \label{figsolvedtime}
  \end{figure}

  Results are shown in Figure \ref{figevalcmp}. For ``simp\_all'', ``aesop,''
  and ``Lean-auto,'' we show the results of their virtual best solvers (VBSes).\footnote{The
  virtual best solver of a given category is equivalent to running
  all the tools in the given category in parallel and taking the first success produced.} We compute
  unique solves among ``rfl'' and these three VBSes.
  
  We find that Lean-auto solves more problems than all existing tools. Specifically,
  ``Lean-auto + Duper'', which supports proof reconstruction, solves 36.6\%
  problems in our benchmark set, which is 5.0\% better than the best previous tool
  ``Aesop''. The fact that ``Lean-auto VBS'' achieves 14.8\% unique solves
  shows that Lean-auto is complementary to existing tools. The overall VBS,
  which combines Lean-auto and all existing tools, solves more than half (53.2\%) of
  the problems in our benchmark set. On the other hand, Lean-auto is
  significantly slower than existing tools on solved problems. This is potentially caused by
  Lean-auto's verified checker and the frequent definitional equality
  testing in Lean-auto's monomorphization.

  To better compare the performance of the various tools, we plot, for each tool, the
  number of solved problems vs. solving time and cumulative solving time.
  The results are shown in Figure \ref{figsolvedtime}. We see that
  Lean-auto is slower than existing tools on simple problems, but eventually
  solves more problems than all existing tools.

\section{Conclusion}

In this paper, we presented the ITP to ATP translation implemented in Lean-auto.
Our contributions are three-fold. First, we addressed challenges posed by
Lean 4's dependent type theory and its various language features. Second, we 
designed a novel monomorphization procedure for dependent type theory.
Finally, we implemented the translation procedure in Lean-auto and evaluated it on Mathlib4.

A possible direction for future work is to design a complete $\lambda_\to^*$
abstraction algorithm. Another direction is to investigate
potential ways of handling existential type quantifiers and non-leading universal
type quantifiers. We would also like to further investigate causes of
Lean-auto's inefficiencies and improve Lean-auto's performance.
\begin{credits}
  \subsubsection{\ackname} The authors thank: Prof. Jasmin Blanchette (Ludwig Maximilian University of Munich)
  for insightful discussions on the monomorphization procedure in Isabelle Sledgehammer; Mario
  Carneiro (Chalmers University of Technology) for helping us understanding implementation
  details of Lean 4; and Leonardo de Moura (Amazon Web Services) for his advice on
  the translation from Lean 4 to SMT solvers. We also greatly appreciate the help of the
  Lean Zulip users who answered our questions related to Lean 4 and Mathlib4.
  This work was supported in part by the Stanford Graduate Fellowship, the Stanford Center for Automated Reasoning,
  and AFRL and DARPA under Agreement FA8750-24-9-1000.

  \subsubsection{\discintname} Clark Barrett is an Amazon Scholar.
  
\end{credits}

%
%
%
%
  %
  %
  %
  %

\bibliographystyle{splncs04}
\bibliography{references}

\begin{thebibliography}{10}
\providecommand{\url}[1]{\texttt{#1}}
\providecommand{\urlprefix}{URL }
\providecommand{\doi}[1]{https://doi.org/#1}

\bibitem{ThmProvingLean4}
Avigad, J., de~Moura, L., Kong, S., Ullrich, S.: Theorem Proving in Lean4
  (2025), \url{https://leanprover.github.io/theorem\_proving\_in\_lean4}

\bibitem{CVC5}
Barbosa, H., Barrett, C., Brain, M., Kremer, G., Lachnitt, H., Mann, M.,
  Mohamed, A., Mohamed, M., Niemetz, A., N{\"o}tzli, A., Ozdemir, A., Preiner,
  M., Reynolds, A., Sheng, Y., Tinelli, C., Zohar, Y.: cvc5: A versatile and
  industrial-strength {SMT} solver. In: Fisman, D., Rosu, G. (eds.) Tools and
  Algorithms for the Construction and Analysis of Systems. pp. 415--442.
  Springer International Publishing, Cham (2022).
  \doi{10.1007/978-3-030-99524-9_24}

\bibitem{LambdaWithType}
Barendregt, H.P.: Lambda calculi with types, pp. 117--309. Oxford University
  Press, Inc., USA (1993), \url{https://dl.acm.org/doi/10.5555/162552.162561}

\bibitem{CoqRefMan}
Barras, B., Boutin, S., Cornes, C., Courant, J., Filli{\^a}tre, J.C.,
  Gim{\'e}nez, E., Herbelin, H., Huet, G.P., Mu{\~n}oz, C.A., Murthy, C.R.,
  Parent, C., Paulin-Mohring, C., Sa{\"i}bi, A., Werner, B.: The {C}oq proof
  assistant : reference manual, version 6.1 (1997),
  \url{https://api.semanticscholar.org/CorpusID:54117279}

\bibitem{HOVampire}
Bhayat, A., Suda, M.: A higher-order {V}ampire (short paper). In:
  Benzm{\"u}ller, C., Heule, M.J., Schmidt, R.A. (eds.) Automated Reasoning.
  pp. 75--85. Springer Nature Switzerland, Cham (2024).
  \doi{10.1007/978-3-031-63498-7_5}

\bibitem{Blanchette2016HammeringTQ}
Blanchette, J.C., Kaliszyk, C., Paulson, L.C., Urban, J.: Hammering towards
  {QED}. J. Formaliz. Reason.  \textbf{9},  101--148 (2016),
  \url{https://api.semanticscholar.org/CorpusID:218028818}

\bibitem{MonoPaper}
B{\"{o}}hme, S.: Proving Theorems of Higher-Order Logic with {SMT} Solvers.
  Ph.D. thesis, Technical University Munich (2012),
  \url{https://nbn-resolving.org/urn:nbn:de:bvb:91-diss-20120511-1084525-1-4}

\bibitem{Agda}
Bove, A., Dybjer, P., Norell, U.: A brief overview of {A}gda -- a functional
  language with dependent types. In: Berghofer, S., Nipkow, T., Urban, C.,
  Wenzel, M. (eds.) Theorem Proving in Higher Order Logics. pp. 73--78.
  Springer Berlin Heidelberg, Berlin, Heidelberg (2009).
  \doi{10.1007/978-3-642-03359-9_6}

\bibitem{MetamathHammer}
Carneiro, M., Brown, C.E., Urban, J.: Automated theorem proving for {M}etamath.
  In: Naumowicz, A., Thiemann, R. (eds.) 14th International Conference on
  Interactive Theorem Proving (ITP 2023). Leibniz International Proceedings in
  Informatics (LIPIcs), vol.~268, pp. 9:1--9:19. Schloss Dagstuhl --
  Leibniz-Zentrum f{\"u}r Informatik, Dagstuhl, Germany (2023).
  \doi{10.4230/LIPIcs.ITP.2023.9}

\bibitem{DuperPaper}
Clune, J., Qian, Y., Bentkamp, A., Avigad, J.: Duper: A proof-producing
  superposition theorem prover for dependent type theory. In: International
  Conference on Interactive Theorem Proving (2024),
  \url{https://api.semanticscholar.org/CorpusID:272330518}

\bibitem{Coquand1988}
Coquand, T., Huet, G.: The calculus of constructions. Information and
  Computation  \textbf{76}(2),  95--120 (1988).
  \doi{10.1016/0890-5401(88)90005-3}

\bibitem{CICIndDef}
Coquand, T., Paulin, C.: Inductively defined types. In: Martin-L{\"o}f, P.,
  Mints, G. (eds.) COLOG-88. pp. 50--66. Springer Berlin Heidelberg, Berlin,
  Heidelberg (1990). \doi{10.1007/3-540-52335-9_47}

\bibitem{Czajka2018HammerFC}
Czajka, L., Kaliszyk, C.: Hammer for {C}oq: Automation for dependent type
  theory. Journal of Automated Reasoning  \textbf{61},  423 -- 453 (2018),
  \url{https://api.semanticscholar.org/CorpusID:11060917}

\bibitem{TypeClassHaskell}
Hall, C.V., Hammond, K., Jones, S.L.P., Wadler, P.: Type classes in {H}askell.
  In: TOPL (1994), \url{https://api.semanticscholar.org/CorpusID:9227770}

\bibitem{HOLMesonPaper}
Harrison, J.: Optimizing proof search in model elimination. In: McRobbie, M.A.,
  Slaney, J.K. (eds.) Automated Deduction --- Cade-13. pp. 313--327. Springer
  Berlin Heidelberg, Berlin, Heidelberg (1996). \doi{10.1007/3-540-61511-3_97}

\bibitem{Harrison2014HistoryOI}
Harrison, J., Urban, J., Wiedijk, F.: History of interactive theorem proving.
  In: Computational Logic (2014),
  \url{https://api.semanticscholar.org/CorpusID:30345151}

\bibitem{Hurd2003FirstOrderPT}
Hurd, J.: First-order proof tactics in higher-order logic theorem provers.
  Design and Application of Strategies/Tactics in Higher Order Logics, number
  NASA/CP-2003-212448 in NASA Technical Reports pp. 56--68 (2003),
  \url{https://api.semanticscholar.org/CorpusID:11201048}

\bibitem{HolyHammerPaper}
Kaliszyk, C., Urban, J.: Hol(y)hammer: Online {ATP} service for {HOL} light.
  Mathematics in Computer Science  \textbf{9}(1),  5--22 (Mar 2015).
  \doi{10.1007/s11786-014-0182-0}

\bibitem{Mizar40Paper}
Kaliszyk, C., Urban, J.: Mizar 40 for mizar 40. Journal of Automated Reasoning
  \textbf{55}(3),  245--256 (Oct 2015). \doi{10.1007/s10817-015-9330-8}

\bibitem{Vampire}
Kov{\'a}cs, L., Voronkov, A.: First-order theorem proving and {V}ampire. In:
  Sharygina, N., Veith, H. (eds.) Computer Aided Verification. pp. 1--35.
  Springer Berlin Heidelberg, Berlin, Heidelberg (2013).
  \doi{10.1007/978-3-642-39799-8_1}

\bibitem{AesopPaper}
Limperg, J., From, A.H.: Aesop: White-box best-first proof search for {L}ean.
  In: Proceedings of the 12th ACM SIGPLAN International Conference on Certified
  Programs and Proofs. pp. 253--266. CPP 2023, Association for Computing
  Machinery, New York, NY, USA (2023). \doi{10.1145/3573105.3575671}

\bibitem{mikula2023magnushammer}
Miku{\l}a, M., Tworkowski, S., Antoniak, S., Piotrowski, B., Jiang, A.Q., Zhou,
  J.P., Szegedy, C., Kuci{\'n}ski, {\L}., Mi{\l}o{\'s}, P., Wu, Y.:
  Magnushammer: A transformer-based approach to premise selection. ArXiv
  (2024), \url{https://arxiv.org/abs/2303.04488}

\bibitem{Z3Paper}
de~Moura, L., Bj{\o}rner, N.: Z3: An efficient {SMT} solver. In: Ramakrishnan,
  C.R., Rehof, J. (eds.) Tools and Algorithms for the Construction and Analysis
  of Systems. pp. 337--340. Springer Berlin Heidelberg, Berlin, Heidelberg
  (2008). \doi{10.1007/978-3-540-78800-3_24}

\bibitem{Lean4}
de~Moura, L.M., Ullrich, S.: The {L}ean 4 theorem prover and programming
  language. In: CADE (2021),
  \url{https://api.semanticscholar.org/CorpusID:235800962}

\bibitem{Paulson1999AGT}
Paulson, L.C.: A generic tableau prover and its integration with {I}sabelle. J.
  Univers. Comput. Sci.  \textbf{5},  73--87 (1999),
  \url{https://api.semanticscholar.org/CorpusID:2551237}

\bibitem{Paulson2012ThreeYO}
Paulson, L.C., Blanchette, J.C.: Three years of experience with {S}ledgehammer,
  a practical link between automatic and interactive theorem provers. In:
  IWIL@LPAR (2012), \url{https://api.semanticscholar.org/CorpusID:598752}

\bibitem{GPTFPaper}
Polu, S., Sutskever, I.: Generative language modeling for automated theorem
  proving. ArXiv  \textbf{abs/2009.03393} (2020),
  \url{https://api.semanticscholar.org/CorpusID:221535103}

\bibitem{ThisPaperOnArxiv}
Qian, Y., Clune, J., Barrett, C., Avigad, J.: Lean-auto: An interface between
  lean 4 and automated theorem provers (2025),
  \url{https://arxiv.org/abs/2505.14929}

\bibitem{LiquidTensor}
Scholze, P.: Liquid tensor experiment. Experimental Mathematics
  \textbf{31}(2),  349--354 (2022). \doi{10.1080/10586458.2021.1926016}

\bibitem{EProver}
Schulz, S.: E - a brainiac theorem prover. AI Commun.  \textbf{15},  111--126
  (2002), \url{https://api.semanticscholar.org/CorpusID:884116}

\bibitem{UPolyCoq}
Sozeau, M., Tabareau, N.: Universe polymorphism in {C}oq. In: Klein, G.,
  Gamboa, R. (eds.) Interactive Theorem Proving. pp. 499--514. Springer
  International Publishing, Cham (2014). \doi{10.1007/978-3-319-08970-6_32}

\bibitem{MathlibPaper}
{The Mathlib Community}: The {L}ean mathematical library. In: Proceedings of
  the 9th ACM SIGPLAN International Conference on Certified Programs and
  Proofs. pp. 367--381. CPP 2020, Association for Computing Machinery, New
  York, NY, USA (2020). \doi{10.1145/3372885.3373824}

\bibitem{ZipperpositionMakeWork}
Vukmirovi\'{c}, P., Bentkamp, A., Blanchette, J., Cruanes, S., Nummelin, V.,
  Tourret, S.: Making higher-order superposition work. J. Autom. Reason.
  \textbf{66}(4),  541--564 (Nov 2022). \doi{10.1007/s10817-021-09613-z}

\bibitem{HOEProver}
Vukmirovi{\'c}, P., Blanchette, J.C., Schulz, S.: Extending a high-performance
  prover to higher-order logic. In: International Conference on Tools and
  Algorithms for Construction and Analysis of Systems (2023),
  \url{https://api.semanticscholar.org/CorpusID:249226027}

\bibitem{Isabelle}
Wenzel, M., Paulson, L.C., Nipkow, T.: The {I}sabelle framework. In:
  International Conference on Theorem Proving in Higher Order Logics (2008),
  \url{https://api.semanticscholar.org/CorpusID:13752195}

\bibitem{Yang2019LearningTP}
Yang, K., Deng, J.: Learning to prove theorems via interacting with proof
  assistants. ArXiv  \textbf{abs/1905.09381} (2019),
  \url{https://api.semanticscholar.org/CorpusID:162184110}

\bibitem{Yang2023LeanDojoTP}
Yang, K., Swope, A.M., Gu, A., Chalamala, R., Song, P., Yu, S., Godil, S.,
  Prenger, R.J., Anandkumar, A.: Leandojo: Theorem proving with
  retrieval-augmented language models. ArXiv  \textbf{abs/2306.15626} (2023),
  \url{https://api.semanticscholar.org/CorpusID:259262077}

\end{thebibliography}

\appendix

\maybeappendix{Logical Symbols of $\lambda C$}\label{applsc}
\ifdefined\cameraReady
\else
  \begin{align*}
  \bot &:= \forall p : \mathsf{U}_0. p \ \ \ (\neg) := \lambda p : \mathsf{U}_0. p \to \bot \\
  (\land) &:= \lambda p \ q : \mathsf{U}_0. \forall r : \mathsf{U}_0. (p \to q \to r) \to r \\
  (\lor) &:= \lambda p \ q : \mathsf{U}_0. \forall r : \mathsf{U}_0. (p \to r) \to (q \to r) \to r \\
  (\leftrightarrow) &:= \lambda p \ q. (p \to q) \land (q \to p) \\
  (=_\ell) &:= \lambda \alpha : \mathsf{U}_\ell. \lambda x \ y : \alpha. \forall p : \alpha \to \mathsf{U}_0. (p \ x \leftrightarrow p \ y) \\
  (\exists_\ell) &:= \lambda \alpha : \mathsf{U}_\ell. \lambda p : \alpha \to \mathsf{U}_0. \forall q : \mathsf{U}_0. ((\forall x : \alpha. p \ x \to q) \to q)
  \end{align*}
\fi

\maybeappendix{Derivation Rules of PTS}\label{apppts}
\ifdefined\cameraReady
\else
  The type judgement $\Gamma \vdash t : \alpha$ in a PTS specified
  by $(\mathcal{S}, \mathcal{A}, \mathcal{R})$ is defined by the following
  axioms and rules:

  \begin{align*}
  & \text{(axioms)}      & & \frac{}{\emptyset \vdash s_1 : s_2} & & \text{if } (s_1, s_2) \in \mathcal{A} \\
  & \text{(start)}       & & \frac{\Gamma \vdash A : s}{\Gamma, x : A \vdash x : A} & & \text{if } x \notin \Gamma \\
  & \text{(weakening)}   & & \frac{\Gamma \vdash A : B \ \ \ \Gamma \vdash C : s}{\Gamma, x : C \vdash A : B} & & \text{if } x \notin \Gamma \\
  & \text{(product)}     & & \frac{\Gamma \vdash A : s_1 \ \ \ \Gamma, x : A \vdash B : s_2}{\Gamma \vdash (\forall x : A. B) : s_3} & & \text{if } (s_1, s_2, s_3) \in \mathcal{R} \\
  & \text{(application)} & & \frac{\Gamma \vdash f : (\forall x : A. B) \ \ \ \Gamma \vdash a : A}{\Gamma \vdash f \ a : B[x := a]} & & \\
  & \text{(abstraction)} & & \frac{\Gamma, x : A \vdash b : B \ \ \ \Gamma \vdash (\forall x : A. B) : s}{\Gamma \vdash (\lambda x : A.b) : (\forall x : A. B)} & & \\
  & \text{(conversion)}  & & \frac{\Gamma \vdash A : B \ \ \ \Gamma \vdash B' : s \ \ \ B \cong B'}{\Gamma \vdash A : B'} & &
  \end{align*}
\fi

\maybeappendix{$\lambda C, \lambda_\to$ and $\lambda_\to^*$}\label{applll}
\ifdefined\cameraReady
\else
  \begin{definition} $\lambda C$ is the pure type system $(\mathcal{S}, \mathcal{A}, \mathcal{R})$ where
    $$\mathcal{S} := \{\mathsf{U}_\ell | \ell \in \mathbb{N}\} \ \ \ \mathcal{A} := \{(\mathsf{U}_\ell, \mathsf{U}_{\ell + 1}) | \ell \in \mathbb{N}\}$$
    $$\mathcal{R} := \{(\mathsf{U}_\ell, \mathsf{U}_m, \mathsf{U}_{\mathsf{imax}(\ell, m)}) | \ell \in \mathbb{N}, m \in \mathbb{N}\}$$
    $$\mathsf{imax}(m, n) := \left\{\begin{aligned}
      \mathsf{max}(m, n), & & n > 0 \\
      0, & & n = 0
    \end{aligned}\right.$$
  \end{definition}
  
  \begin{definition} $\lambda_\to$ is the pure type system $(\mathcal{S}, \mathcal{A}, \mathcal{R})$ where
    $$\mathcal{S} := \{\mathsf{U}_1, \mathsf{U}_1'\} \ \ \ \mathcal{A} := \{(\mathsf{U}_1, \mathsf{U}_1')\} \ \ \ 
      \mathcal{R} := \{(\mathsf{U}_1, \mathsf{U}_1, \mathsf{U}_1)\}$$
    This is equivalent to simply typed lambda calculus, where $\mathsf{U}_1$ and $\mathsf{U}_1'$ are
    usually denoted as $*$ and $\square$, respectively.
  \end{definition}

  \begin{definition} $\lambda_\to^*$ is the pure type system $(\mathcal{S}, \mathcal{A}, \mathcal{R})$ where
    $$\mathcal{S} := \{\mathsf{U}_\ell | \ell \in \mathbb{N}^*\} \cup \{\mathsf{U}_\ell' | \ell \in \mathbb{N}^*\} \ \ \
      \mathcal{A} := \{(\mathsf{U}_\ell, \mathsf{U}_\ell') | \ell \in \mathbb{N}^*\}$$
    $$\mathcal{R} := \{(\mathsf{U}_\ell, \mathsf{U}_m, \mathsf{U}_{\mathsf{max} \{l, m\}}) | \ell \in \mathbb{N}^*, m \in \mathbb{N}^*\}$$
  \end{definition}
\fi

\maybeappendix{HOL and $\text{HOL}^*$}\label{apphol}
\ifdefined\cameraReady
\else
  \begin{definition} $\text{HOL}$ ($\text{HOL}^*$) is defined as $\lambda_\to$ ($\lambda_\to^*$) augmented with the following symbols:
    \begin{enumerate}
      \item $\mathsf{Bool}$
      \item $\bot'$ and $\to'$
      \item $\forall'_s$, for each $s \in \mathcal{T}_\to^*$. Note that we are not requiring $s$ to be a type here
        because the typing rules below will ensure that $s$ must be a type in a well-formed $\forall_s'$.
    \end{enumerate}
    
    \noindent the following typing rules:
    $$\frac{}{\vdash \mathsf{Bool} : \mathsf{U}_1} \ \ \ \ \frac{}{\Gamma \vdash \bot' : \mathsf{Bool}}$$
    $$\frac{}{\Gamma \vdash \to' : \mathsf{Bool} \to \mathsf{Bool} \to \mathsf{Bool}} \ \ \ \
    \frac{\Gamma \vdash s : \mathsf{U}_\ell}{\Gamma \vdash \forall'_s : (s \to \mathsf{Bool}) \to \mathsf{Bool}}$$
    
    \noindent and the logical axioms and deduction rules of higher-order logic.

  \end{definition}
  
  \noindent \textbf{Note:} The logical symbols $\neg', \land', \lor', \leftrightarrow, ='_s, \exists'_s$ are defined in a way consistent
  with their definition in $\lambda C$:
  \begin{align*}
  (\neg') &:= \lambda (p : \mathsf{Bool}). (p \to' \bot') \\
  (\land') &:= \lambda (p \ q : \mathsf{Bool}). \forall (r : \mathsf{Bool}). ((p \to' q \to' r) \to' r) \\
  (\lor') &:= \lambda (p \ q : \mathsf{Bool}). \forall (r : \mathsf{Bool}). ((p \to' r) \to' (q \to' r) \to' r) \\
  (\leftrightarrow') &:= \lambda (p \ q : \mathsf{Bool}). ((p \to' q) \land' (q \to' p)) \\
  (='_s) &:= \lambda (x \ y : s). \forall (p : s \to \mathsf{Bool}). (p \ x \leftrightarrow' p \ y) \\
  (\exists'_s) &:= \lambda (p : s \to \mathsf{Bool}). \forall (q : \mathsf{Bool}). ((\forall (x : s). p \ x \to' q) \to' q)
  \end{align*}

  We use $\forall' (x : s). t$ as a shorthand for $\forall'_s \ (\lambda (x : s). t)$, and $\exists'(x : \alpha). t$
  as a shorthand for $\exists_s' \ (\lambda (x : s). t)$.

  For simplicity, the $\text{HOL}^*$ system we present in this paper
  only contains one symbol $\mathsf{Bool}$ for the type of propositions. In the implementation
  of Lean-auto, the $\text{HOL}^*$ system have a symbol $\mathsf{Bool}_\ell : \mathsf{U}_\ell$
  for each universe level $\ell$, and each universe level have its own copy of logical symbols.
\fi

\maybeappendix{Universe Lifting}\label{appulift}
\ifdefined\cameraReady
\else
  In this appendix, we discuss the translation procedure from $\text{HOL}^*$ to HOL
  in Lean-auto. For simplicity, universe lifting as presented in this
  section differs from Lean-auto's implementation in terms of
  how $\mathsf{Bool}$ is handled.
  
  First, we show that $\text{HOL}^*$ and HOL are, in a sense, equivalent to each other.

  \begin{definition}
    Let $\rho^* : \mathcal{T}_\to^* \to \mathcal{T}_\to$ be the mapping that forgets the universe levels, i.e.
    $$\begin{aligned}
    & \rho^*(\mathsf{Bool}) := \mathsf{Bool} \ \ \ \ \ \ \rho^*(\mathsf{U}_\ell) := \mathsf{U}_1 \ \ \ \ \ \
      \rho^*(\mathsf{U}_\ell') := \mathsf{U}_1' \ \ \ \ \ \ \rho^*(x) := x, \text{ for }x \in V \\
    & \rho^*(M \ N) := \rho^*(M) \ \rho^*(N) \ \ \ \ \ \ \rho^*(\lambda (x : s). M) := \lambda (x : \rho^*(s)). \rho^*(M) \\
    & \rho^*(\bot') := \bot' \ \ \ \ \ \ \rho^*(\to') := \to' \ \ \ \ \ \ \rho^*(\forall_s') := \forall_{\rho^*(s)}'
    \end{aligned}$$

    \noindent $\rho^*$ is extended to contexts as follows: $\rho^*(\emptyset) := \emptyset; \ \rho^*(\Gamma, x : \sigma) := \rho(\Gamma), x : \rho(\sigma)$
  \end{definition}

  \begin{definition}
    Let $\rho_\ell : \mathcal{T}_\to \to \mathcal{T}_\to^*(\ell \in \mathbb{N}^*)$ be the mapping that turns $\mathsf{U}_1$ into $\mathsf{U}_\ell$, i.e.
    $$\begin{aligned}
    & \rho(\mathsf{Bool}) := \mathsf{Bool} \ \ \ \ \ \ \rho(\mathsf{U}_1) := \mathsf{U}_\ell \ \ \ \ \ \
      \rho(\mathsf{U}_1') := \mathsf{U}_\ell' \ \ \ \ \ \ \rho(x) := x, \text{ for }x \in V \\
    & \rho(M \ N) := \rho(M) \ \rho(N) \ \ \ \ \ \ \rho(\lambda (x : s). M) := \lambda (x : \rho(s)). \rho(M) \\
    & \rho(\bot') := \bot' \ \ \ \ \ \ \rho(\to') := \to' \ \ \ \ \ \ \rho(\forall_s') := \forall_{\rho(s)}'
    \end{aligned}$$

    \noindent $\rho$ is extended to contexts as follows: $\rho(\emptyset) := \emptyset; \ \rho(\Gamma, x : \sigma) := \rho(\Gamma), x : \rho(\sigma)$
  \end{definition}

  \begin{theorem}
    For all $t \in \mathcal{T}_\to$, $\rho_\ell^*(\rho_\ell(t)) = t$.
    \begin{proof} Induction on the construction rules of $\mathcal{T}_\to$. \end{proof}
  \end{theorem}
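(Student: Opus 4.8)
The plan is to prove $\rho^*(\rho_\ell(t)) = t$ by structural induction on $t \in \mathcal{T}_\to$, with one case for each term former of the PTS grammar together with the extra HOL constants: variables $x$, the sorts $\mathsf{U}_1$ and $\mathsf{U}_1'$, applications $M\,N$, abstractions $\lambda(x:s).M$, product types $\forall(x:A).B$, and the constants $\mathsf{Bool}, \bot', \to', \forall_s'$. The structural fact that makes the induction immediate is that $\rho_\ell$ and $\rho^*$ are defined by the \emph{same} homomorphic recursion: each commutes with application, $\lambda$-abstraction, the product former, and $\forall_{(-)}'$, and each fixes variables, $\mathsf{Bool}$, $\bot'$, and $\to'$. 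Hence the composite $\rho^* \circ \rho_\ell$ likewise commutes with every constructor and fixes every constant, so the statement collapses to checking the composite on the sorts, where $\mathsf{U}_1 \mapsto \mathsf{U}_\ell \mapsto \mathsf{U}_1$ and $\mathsf{U}_1' \mapsto \mathsf{U}_\ell' \mapsto \mathsf{U}_1'$, i.e.\ the identity.

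Concretely, I would first dispatch the base cases: $\rho^*(\rho_\ell(x)) = \rho^*(x) = x$; $\rho^*(\rho_\ell(\mathsf{U}_1)) = \rho^*(\mathsf{U}_\ell) = \mathsf{U}_1$ and symmetrically for $\mathsf{U}_1'$; and $\rho^*(\rho_\ell(c)) = c$ for $c \in \{\mathsf{Bool}, \bot', \to'\}$, since both maps act as the identity on these. For the inductive cases I would apply the induction hypotheses to the immediate subterms: $\rho^*(\rho_\ell(M\,N)) = \rho^*(\rho_\ell(M))\,\rho^*(\rho_\ell(N)) = M\,N$; $\rho^*(\rho_\ell(\lambda(x:s).M)) = \lambda(x:\rho^*(\rho_\ell(s))).\rho^*(\rho_\ell(M)) = \lambda(x:s).M$; the product type $\forall(x:A).B$ (equivalently $A \to B$ in $\lambda_\to$) is handled by the same computation on $A$ and $B$; and $\rho^*(\rho_\ell(\forall_s')) = \rho^*(\forall_{\rho_\ell(s)}') = \forall_{\rho^*(\rho_\ell(s))}' = \forall_s'$, using the induction hypothesis on the subterm $s$.

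There is essentially no hard step: everything reduces to unwinding the two definitions and invoking the induction hypotheses, and the content is simply that $\rho^*$ is a strict left inverse of $\rho_\ell$. The only points that need care are (i) making the case analysis genuinely exhaustive over all term formers of $\mathcal{T}_\to$, including the product/arrow former, which belongs to the PTS grammar even though the displayed clauses of $\rho^*$ and $\rho_\ell$ elide it, and (ii) treating the annotation $s$ in $\forall_s'$ and in $\lambda(x:s).(-)$ as a proper subterm so that the recursion is well-founded. If one prefers a slicker packaging, one can observe that both maps factor as ``relabel the sorts, then rebuild termwise,'' and that $\rho^* \circ \rho_\ell$ relabels sorts by a function that restricts to the identity on $\{\mathsf{U}_1, \mathsf{U}_1'\}$, hence is the identity on all of $\mathcal{T}_\to$.
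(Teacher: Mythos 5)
Your proof is correct and follows exactly the route the paper takes: the paper's proof is just ``induction on the construction rules of $\mathcal{T}_\to$,'' and your argument spells out that same structural induction, checking that $\rho^*\circ\rho_\ell$ fixes each constant and commutes with each term former. Nothing further is needed.
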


  \begin{theorem}
    Forgetting universe levels preserves judgement, i.e., if $\Gamma \vdash t : s$ in $\text{HOL}^*$,
    then $\rho^*(\Gamma) \vdash \rho^*(t) : \rho^*(s)$ in $\text{HOL}$.
    \begin{proof} Induction on the derivation rules of $\text{HOL}^*$. \end{proof}
  \end{theorem}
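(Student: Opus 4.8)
The plan is to proceed by induction on the derivation of $\Gamma \vdash t : s$ in $\text{HOL}^*$, checking that $\rho^*$ transports each derivation rule to a valid instance of the corresponding rule in $\text{HOL}$. Before starting the induction I would first extend $\rho^*$ to product types in the evident way, $\rho^*(\forall (x : A). B) := \forall (x : \rho^*(A)). \rho^*(B)$ (this clause is needed for the statement even to typecheck and is implicit in the definition), and then establish two auxiliary lemmas.

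The first is a \emph{substitution lemma}: for all $\text{HOL}^*$ terms $t, u$ and variables $x$, $\rho^*(t[x := u]) = \rho^*(t)[x := \rho^*(u)]$. This is a routine structural induction on $t$; the only clauses worth a glance are the binders $\lambda$ and $\forall$ (standard, with the usual capture-avoidance bookkeeping) and the logical symbol $\forall'_s$, where $\rho^*\bigl((\forall'_s)[x := u]\bigr) = \forall'_{\rho^*(s[x := u])} = \forall'_{\rho^*(s)[x := \rho^*(u)]} = (\rho^*(\forall'_s))[x := \rho^*(u)]$ using the inductive hypothesis on $s$. The second is that \emph{$\rho^*$ preserves $\cong$}: if $M \cong N$ in $\text{HOL}^*$ then $\rho^*(M) \cong \rho^*(N)$ in $\text{HOL}$. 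Since $\cong$ is generated from one-step $\beta$- and $\eta$-reduction by taking the congruence, reflexive, symmetric and transitive closures, and $\rho^*$ is already a congruence by its defining clauses (it commutes with application and with both binders), it suffices to check that $\rho^*$ sends a single reduction step to a reduction step: for $\beta$, $\rho^*\bigl((\lambda x : A. M)\,N\bigr) = (\lambda x : \rho^*(A). \rho^*(M))\,\rho^*(N) \to_\beta \rho^*(M)[x := \rho^*(N)] = \rho^*(M[x := N])$ by the substitution lemma, and the $\eta$ case is analogous.

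With these in hand the main induction is mechanical. For the PTS rules: the axiom $(\mathsf{U}_\ell, \mathsf{U}_\ell')$ of $\lambda_\to^*$ maps to the axiom $(\mathsf{U}_1, \mathsf{U}_1')$ of $\lambda_\to$, and any product rule $(\mathsf{U}_\ell, \mathsf{U}_m, \mathsf{U}_{\max(\ell, m)})$ maps to the unique product rule $(\mathsf{U}_1, \mathsf{U}_1, \mathsf{U}_1)$; \emph{start} and \emph{weakening} go through because $\rho^*$ leaves variable names untouched, so the side condition $x \notin \Gamma$ is preserved and $\rho^*(\Gamma), x : \rho^*(A) = \rho^*(\Gamma, x : A)$; \emph{application} uses the substitution lemma to rewrite $\rho^*(B[x := a])$ as $\rho^*(B)[x := \rho^*(a)]$; \emph{abstraction} is immediate from the inductive hypothesis; and \emph{conversion} uses the second lemma to turn $B \cong B'$ into $\rho^*(B) \cong \rho^*(B')$. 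For the $\text{HOL}^*$-specific typing rules, $\rho^*$ fixes $\mathsf{Bool}$, $\bot'$, $\to'$, sends $\forall'_s$ to $\forall'_{\rho^*(s)}$, and fixes $\mathsf{U}_1$, so each rule maps to its $\text{HOL}$ counterpart; for the $\forall'_s$ rule the inductive hypothesis supplies $\rho^*(\Gamma) \vdash \rho^*(s) : \mathsf{U}_1$, which is exactly the premise the $\text{HOL}$ rule requires.

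I expect the only genuine subtlety to be the conversion case, i.e. the lemma that $\rho^*$ respects $\cong$ (which in turn rests on the substitution lemma); everything else is bookkeeping. A secondary point to be careful about is simply to confirm that the image under $\rho^*$ of every $\lambda_\to^*$ sort, axiom, and rule actually lies in $\lambda_\to$'s specification — which it does, precisely because $\lambda_\to$'s specification is the ``collapse'' of $\lambda_\to^*$'s along $\ell \mapsto 1$.
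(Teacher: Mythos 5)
Your proposal is correct and follows exactly the route the paper intends: the paper's proof is the one-line ``induction on the derivation rules of $\text{HOL}^*$,'' and you carry out that same induction, supplying the standard auxiliary facts (the substitution lemma and preservation of $\cong$ under $\rho^*$) that any careful write-up of this induction would need. Your observation that $\rho^*$ must be extended to product types, and your identification of the conversion case as the only nontrivial step, are both apt refinements of the paper's terse argument rather than a departure from it.
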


  \begin{theorem}
    $\rho_\ell$ preserves judgement, i.e., if $\Gamma \vdash t : s$ in HOL, then
    $\rho_\ell(\Gamma) \vdash \rho_\ell(t) : \rho_\ell(s)$ in $\text{HOL}^*$.
    \begin{proof} Induction on the derivation rules of HOL. \end{proof}
  \end{theorem}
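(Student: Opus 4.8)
The plan is a structural induction on the derivation of $\Gamma \vdash t : s$ in HOL, following the usual pattern for transporting typing along a morphism between PTS specifications. Before the induction I would record two routine facts about $\rho_\ell$ viewed purely as a syntactic operation on $\mathcal{T}_\to$: (i) it commutes with substitution, i.e. $\rho_\ell(B[x := a]) = \rho_\ell(B)[x := \rho_\ell(a)]$, which is immediate since $\rho_\ell$ is a homomorphism for application, $\lambda$ and $\forall$ and is the identity on variables; and (ii) it commutes with one-step $\beta$- and $\eta$-reduction, hence $M \cong N$ implies $\rho_\ell(M) \cong \rho_\ell(N)$. Both are one-line inductions on term structure. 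I would also note that $\rho_\ell$ acts on contexts declaration-wise and never renames a variable, so every side condition of the form $x \notin \Gamma$ is preserved verbatim.

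With these in hand the induction goes rule by rule. For the $\lambda_\to$ core: the only (axioms) instance $\emptyset \vdash \mathsf{U}_1 : \mathsf{U}_1'$ maps to $\emptyset \vdash \mathsf{U}_\ell : \mathsf{U}_\ell'$, which is an axiom of $\text{HOL}^*$ because $(\mathsf{U}_\ell, \mathsf{U}_\ell') \in \mathcal{A}$ for every $\ell \in \mathbb{N}^*$; (start) and (weakening) follow directly from the induction hypothesis together with the context remark; (product) uses that $(\mathsf{U}_1,\mathsf{U}_1,\mathsf{U}_1) \in \mathcal{R}_{\text{HOL}}$ is sent to $(\mathsf{U}_\ell,\mathsf{U}_\ell,\mathsf{U}_\ell) \in \mathcal{R}_{\text{HOL}^*}$ (here $\max\{\ell,\ell\}=\ell$); (application) uses fact (i); (abstraction) is immediate from the induction hypothesis; and (conversion) uses fact (ii) to transport $B \cong B'$. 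For the HOL-specific constants, $\rho_\ell$ fixes $\bot'$ and $\to'$ and sends $\forall'_s$ to $\forall'_{\rho_\ell(s)}$, and the corresponding typing rules transport because the types $\mathsf{Bool}$, $\mathsf{Bool}\to\mathsf{Bool}\to\mathsf{Bool}$ and $(s\to\mathsf{Bool})\to\mathsf{Bool}$ are either fixed or built by the already-handled constructors, while the side condition $\Gamma \vdash s : \mathsf{U}_\bullet$ of the $\forall'_s$ rule transports by the induction hypothesis (the $\text{HOL}^*$ rule admits an arbitrary sort in that position). Finally, under the Curry--Howard reading used throughout, the logical axioms and deduction rules of HOL are typing judgements for fixed logical constants; since $\rho_\ell$ fixes those constants and their stated types, each transports exactly as above.

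The one delicate case --- and the step I expect to be the real obstacle --- is the typing of $\mathsf{Bool}$ itself. In the simplified system of this appendix both HOL and $\text{HOL}^*$ have the single rule $\vdash \mathsf{Bool} : \mathsf{U}_1$, yet $\rho_\ell$ demands $\vdash \mathsf{Bool} : \rho_\ell(\mathsf{U}_1) = \mathsf{U}_\ell$, which is not derivable in $\text{HOL}^*$ when $\ell > 1$ (there is no cumulativity). This is precisely the mismatch anticipated by the remark accompanying the definition of $\text{HOL}^*$ (see \refappendix{apphol}): in the actual Lean-auto system $\text{HOL}^*$ carries a family $\mathsf{Bool}_\ell : \mathsf{U}_\ell$, each with its own copy of the logical symbols, so the correct reading of $\rho_\ell$ sends $\mathsf{Bool} \mapsto \mathsf{Bool}_\ell$ (and $\bot' \mapsto \bot'_\ell$, $\to' \mapsto \to'_\ell$, $\forall'_s \mapsto \forall'_{\rho_\ell(s)}$), after which $\vdash \mathsf{Bool}_\ell : \mathsf{U}_\ell$ is again an instance of the built-in rule and the case closes. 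I would therefore either carry out the induction in that level-indexed system, or, staying within the simplified presentation, restrict to $\ell = 1$, in which case $\rho_1$ is essentially the inclusion and the statement is trivial. Everything else in the argument is bookkeeping.
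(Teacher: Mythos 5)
Your proposal takes the same route as the paper, whose entire proof is the one-liner ``induction on the derivation rules of HOL''; your rule-by-rule elaboration, with the substitution and $\beta\eta$-commutation lemmas, is exactly what that one-liner leaves implicit. The wrinkle you flag about $\vdash \mathsf{Bool} : \mathsf{U}_\ell$ for $\ell > 1$ is real in the simplified single-$\mathsf{Bool}$ presentation and is silently elided by the paper, but it is resolved just as you suggest by the level-indexed $\mathsf{Bool}_\ell : \mathsf{U}_\ell$ (with $\rho_\ell(\mathsf{Bool}) = \mathsf{Bool}_\ell$) mentioned in the paper's closing remark of the appendix defining HOL and $\text{HOL}^*$, so your handling of that case is appropriate.
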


  \begin{theorem}
    $\text{HOL}^*$ and HOL are equivalent, i.e., if $\Gamma \vdash p : \mathsf{Bool}$ in $\text{HOL}^*$
    and $p$ is provable in $\text{HOL}^*$, then $\rho^*(p)$ is provable in HOL; if $\Gamma \vdash p : \mathsf{Bool}$
    in HOL and $p$ is provable in HOL, then $\rho_\ell(p)$ is provable in $\text{HOL}^*$ for any $\ell \in \mathbb{N}^*$.
    \begin{proof}
      Let $\mathcal{D}$ be a proof of $p$ in $\text{HOL}^*$, the a proof of
      $\rho^*(p)$ in HOL can be obtained by forgetting universe levels in $\mathcal{D}$.
      The converse can be proved in a similar way.
    \end{proof}
  \end{theorem}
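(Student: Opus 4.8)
The plan is to lift the term-level maps $\rho^*$ and $\rho_\ell$ to maps on \emph{proof derivations}, using the two judgement-preservation theorems already established. A proof of $p$ in $\text{HOL}^*$ (resp.\ HOL) is a finite derivation tree whose nodes are instances of the logical axioms and deduction rules of higher-order logic; I would show that replacing every term occurring in such a tree by its $\rho^*$-image (resp.\ $\rho_\ell$-image) again yields a well-formed derivation, now in HOL (resp.\ $\text{HOL}^*$), whose conclusion is $\rho^*(p)$ (resp.\ $\rho_\ell(p)$). The two directions being symmetric, I would carry out the first in detail and then remark that the second is identical with $\rho^*$ replaced by $\rho_\ell$ and the roles of the two systems swapped.

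The first thing to establish is that $\rho^*$ behaves well with respect to the operations out of which the axioms and rules of HOL are built. A routine induction on term structure shows that $\rho^*$ commutes with capture-avoiding substitution, $\rho^*(t[x := u]) = \rho^*(t)[x := \rho^*(u)]$; since $\rho^*$ is moreover a homomorphism for application and $\lambda$, it maps each $\beta$- and $\eta$-redex to a redex of the same kind, so $t \cong t'$ implies $\rho^*(t) \cong \rho^*(t')$. A second short induction, reading off the definitions of the derived connectives, shows that $\rho^*$ sends each of $\neg', \land', \lor', \leftrightarrow', {='_s}, \exists'_s$ to its counterpart, e.g.\ $\rho^*({='_s}) = {='_{\rho^*(s)}}$ and $\rho^*(\exists'_s) = \exists'_{\rho^*(s)}$. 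Together with the theorem that $\rho^*$ preserves typing judgements (collapsing every $\mathsf{U}_\ell$ to $\mathsf{U}_1$), these facts say exactly that every side condition a HOL rule can impose — a well-formedness judgement, a conversion $B \cong B'$, or the requirement that some subterm be a type — is preserved under $\rho^*$.

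With these lemmas in place, the proof of the first implication is an induction on the derivation $\mathcal{D}$ of $p$. Each axiom scheme of HOL is closed under $\rho^*$ because $\rho^*$ commutes with the term constructors and with substitution, so the $\rho^*$-image of an axiom instance is again an axiom instance; each inference rule is preserved because $\rho^*$ commutes with substitution and $\cong$ and preserves the typing premises. Hence $\rho^*(\mathcal{D})$ is a valid HOL derivation with conclusion $\rho^*(p)$, which is therefore provable in HOL. The converse runs identically using $\rho_\ell$ and the theorem that $\rho_\ell$ preserves typing (sending $\mathsf{U}_1$ to $\mathsf{U}_\ell$); the only extra point to note is that the single product rule $(\mathsf{U}_1, \mathsf{U}_1, \mathsf{U}_1)$ of $\lambda_\to$ is carried into the rule set of $\lambda_\to^*$ via the instance $(\mathsf{U}_\ell, \mathsf{U}_\ell, \mathsf{U}_\ell)$, so applications of the product rule cause no trouble.

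The main obstacle here is presentational rather than mathematical: since the excerpt does not spell out the axioms and deduction rules of higher-order logic, the argument must be phrased at the level of ``any standard presentation,'' and one has to check that the chosen presentation builds its rules only from application, $\lambda$, substitution, $\beta\eta$-conversion, and typing judgements — all of which $\rho^*$ and $\rho_\ell$ respect. If the presentation includes an explicit comprehension or extensionality axiom mentioning a particular type $s$, one verifies that its $\rho^*$-image is the corresponding instance at $\rho^*(s)$, which again follows from the substitution and homomorphism lemmas. No genuinely new difficulty arises beyond the bookkeeping already discharged by the judgement-preservation theorems.
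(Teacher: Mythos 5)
Your proposal is correct and follows essentially the same route as the paper: the paper's proof simply says a proof of $\rho^*(p)$ is obtained by forgetting universe levels throughout the derivation $\mathcal{D}$ (and symmetrically for $\rho_\ell$), which is exactly your derivation-transformation argument. You have merely spelled out the routine supporting lemmas (commutation with substitution, preservation of $\beta\eta$-conversion and of the derived connectives) that the paper leaves implicit.
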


  \noindent The universe lifting procedure in Lean-auto is the translation of $\text{HOL}^*$
  to HOL in the context of $\lambda C$. In other words, it is the translation of \textit{the}
  embedding of HOL in $\lambda C$ into \textit{an} embedding of $\text{HOL}^*$ in $\lambda C$.

  \begin{definition}
    The $\ell$-embedding $\pi_\ell : \mathcal{T}_\to \to \mathcal{T}_C$ of HOL into
    $\lambda C$ is defined as $\pi_\ell := \pi^* \circ \rho_\ell$, where $\pi^*$ is the
    canonical embedding of $\text{HOL}^*$ into $\lambda C$.\footnote{See \refappendix{fmehop}
    for the definition of $\pi^*$.}
  \end{definition}

  \begin{definition}
    A universe lifting facility consists of three families of functions
    \begin{enumerate}
      \item $\mathsf{GLift}_{u, v} : \mathsf{U}_u \to \mathsf{U}_{\max\{u, v + 1\}}$
      \item $\mathsf{GLift.up}_{u, v} : \forall (\alpha : \mathsf{U}_u). \ \alpha \to \mathsf{GLift}_{u, v} \ \alpha$
      \item $\mathsf{GLift.down}_{u, v} : \forall (\alpha : \mathsf{U}_u). \ \mathsf{GLift}_{u, v} \ \alpha \to \alpha$
    \end{enumerate}
    where $u, v \in \mathbb{N}$, such that they satisfy the following bijectivity condition:
    $$\forall (\alpha : \mathsf{U}_u). \mathsf{GLift.up}_{u, v} \ \alpha \circ \mathsf{GLift.down}_{u, v} \ \alpha = \lambda (x : \mathsf{GLift}_{u, v} \ \alpha). x$$
    $$\forall (\alpha : \mathsf{U}_u). \mathsf{GLift.down}_{u, v} \ \alpha \circ \mathsf{GLift.up}_{u, v} \ \alpha = \lambda (x : \alpha). x$$
  \end{definition}

  \noindent In Lean 4, universe lifting facility can be realized by the following inductive type:
  
  \begin{lstlisting}[style=leanHH]
structure GLift.{u, v} (α : Sort u) : Sort (max u (v + 1)) where
  /-- Lift a value into `GLift α` -/    up ::
  /-- Extract a value from `GLift α` -/ down : α
  \end{lstlisting}

  \begin{theorem} Assume the existence of a universe lifting facility in $\lambda C$. Then,
    for all $\ell \in \mathbb{N}$, there exists two families of $\lambda C$ functions
    $$\mathsf{Up}_s : s \to \mathsf{UpType} \ s \ \ \ \mathsf{Down}_s : \mathsf{UpType} \ s \to s$$
    for sorts $s : \mathsf{U}_{\ell'}, \ell' \leq \ell + 1$, satisfying the bijectivity conditions
    $$\mathsf{Up}_s \circ \mathsf{Down}_s = \lambda (x : \mathsf{UpType} \ s). x \ \ \
      \mathsf{Down}_s \circ \mathsf{Up}_s = \lambda (x : s). s$$
    and the congruence condition
    $$\forall (f : \alpha \to \beta). \ \mathsf{Up}_\beta \ (f \ x) = (\mathsf{Up}_{\alpha \to \beta} \ f) \ (\mathsf{Up}_\alpha \ x)$$
    where $\mathsf{UpType}$ is recursively defined as follows:
    $$\begin{aligned}
    \mathsf{UpType} \ x & := \mathsf{GLift}_{\ell', \ell} \ x, \text{ for } x \in V \\
    \mathsf{UpType} \ (\alpha \to \beta) & := \mathsf{UpType} \ \alpha \to \mathsf{UpType} \ \beta
    \end{aligned}$$
    \begin{proof}
      Structural induction on $s$
      \begin{enumerate}
        \item If $s = a$, where $a \in V$ is a variable, then we can define
          $$\mathsf{Up}_a := \mathsf{GLift.up}_{\ell', \ell} \ a \ \ \ \mathsf{Down}_a := \mathsf{Glift.down}_{\ell', \ell} \ a$$
        \item If $s = (\alpha \to \beta)$ and the induction hypothesis holds for $\alpha$ and $\beta$, then we can define
          $$\begin{aligned}
          \mathsf{Up}_{\alpha \to \beta} &:= \lambda (f : \alpha \to \beta) \ (x : \mathsf{UpType} \ \alpha). \ \mathsf{Up}_\beta \ (f \ (\mathsf{Down}_\alpha \ x)) \\
          \mathsf{Down}_{\alpha \to \beta} &:= \lambda (f : \mathsf{UpType} \ \alpha \to \mathsf{UpType} \ \beta) \ (x : \alpha). \ \mathsf{Down}_\beta \ (f \ (\mathsf{Up}_\alpha \ x))
          \end{aligned}$$
      \end{enumerate}
    \end{proof}
  \label{uliftupdownthm}
  \end{theorem}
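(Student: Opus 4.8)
The plan is to prove the statement by structural induction on the type expression $s$, following exactly the recursion that defines $\mathsf{UpType}$, so that there are two cases: $s$ an atomic type (a variable, or $\mathsf{Bool}$, treated atomically) and $s = \alpha \to \beta$. For the base case $s = a$ with $a : \mathsf{U}_{\ell'}$ and $\ell' \le \ell + 1$, I would set $\mathsf{Up}_a := \mathsf{GLift.up}_{\ell', \ell}\ a$ and $\mathsf{Down}_a := \mathsf{GLift.down}_{\ell', \ell}\ a$. The typings $\mathsf{Up}_a : a \to \mathsf{UpType}\ a$ and $\mathsf{Down}_a : \mathsf{UpType}\ a \to a$ are immediate from the signatures of the universe lifting facility once one observes that $\mathsf{GLift}_{\ell',\ell}\ a : \mathsf{U}_{\max\{\ell', \ell+1\}} = \mathsf{U}_{\ell+1}$ (using $\ell' \le \ell+1$), and the two required bijectivity equations are literally the bijectivity condition postulated for the facility.

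For the inductive step, suppose $\mathsf{Up}_\alpha, \mathsf{Down}_\alpha, \mathsf{Up}_\beta, \mathsf{Down}_\beta$ are already constructed and satisfy the bijectivity (and congruence) conditions. Set
\[
  \mathsf{Up}_{\alpha\to\beta} := \lambda (f : \alpha \to \beta)\,(x : \mathsf{UpType}\ \alpha).\ \mathsf{Up}_\beta\,(f\,(\mathsf{Down}_\alpha\ x)),
\]
\[
  \mathsf{Down}_{\alpha\to\beta} := \lambda (f : \mathsf{UpType}\ \alpha \to \mathsf{UpType}\ \beta)\,(x : \alpha).\ \mathsf{Down}_\beta\,(f\,(\mathsf{Up}_\alpha\ x)).
\]
Well-typedness follows from the induction hypotheses and the defining equation $\mathsf{UpType}(\alpha\to\beta) = \mathsf{UpType}\ \alpha \to \mathsf{UpType}\ \beta$. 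To verify $\mathsf{Up}_{\alpha\to\beta} \circ \mathsf{Down}_{\alpha\to\beta} = \lambda x.\,x$, I would apply function extensionality twice: for $g : \mathsf{UpType}\ \alpha \to \mathsf{UpType}\ \beta$ and $y : \mathsf{UpType}\ \alpha$, $\beta$-reduction turns the left side, applied to $g$ and $y$, into $\mathsf{Up}_\beta(\mathsf{Down}_\beta(g(\mathsf{Up}_\alpha(\mathsf{Down}_\alpha\ y))))$, and then the $\alpha$-level identity $\mathsf{Up}_\alpha \circ \mathsf{Down}_\alpha = \mathrm{id}$ and the $\beta$-level identity $\mathsf{Up}_\beta \circ \mathsf{Down}_\beta = \mathrm{id}$ collapse this to $g\ y$. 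The direction $\mathsf{Down}_{\alpha\to\beta} \circ \mathsf{Up}_{\alpha\to\beta} = \lambda x.\,x$ is symmetric, using the same two identities in the opposite order.

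The congruence condition is then essentially free: for $f : \alpha \to \beta$ and $x : \alpha$, unfolding and $\beta$-reducing $(\mathsf{Up}_{\alpha\to\beta}\ f)(\mathsf{Up}_\alpha\ x)$ gives $\mathsf{Up}_\beta(f(\mathsf{Down}_\alpha(\mathsf{Up}_\alpha\ x)))$, and $\mathsf{Down}_\alpha(\mathsf{Up}_\alpha\ x) = x$ by the $\alpha$-level bijectivity, so this equals $\mathsf{Up}_\beta(f\ x)$ as required (the stated equation being understood with $x$ universally quantified alongside $f$). Thus the only real work is in the bijectivity equations of the inductive step.

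The main obstacle is not a single hard computation but two points requiring care. First, the universe bookkeeping: one must know that for every sub-expression $s'$ of $s$ that the induction visits, $s'$ is again a sort in some $\mathsf{U}_{\ell''}$ with $\ell'' \le \ell+1$, so that $\mathsf{GLift}_{\ell'',\ell}$ and the induction hypothesis are applicable. This holds because the sorts under consideration are $\text{HOL}^*$ types --- built only from atomic types and non-dependent arrows --- and the $\mathsf{max}$ product rule of $\lambda_\to^*$ forces each component of an arrow type to lie in a universe no larger than that of the arrow; it would fail for a genuinely dependent $\lambda C$ product. Second, the equational reasoning in the inductive step uses function extensionality, which is available in Lean 4 but not in plain CoC, so the construction really lives in Lean 4's extension of $\lambda C$ rather than in $\lambda C$ proper.
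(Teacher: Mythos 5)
Your construction is exactly the one in the paper's proof — the same base-case definitions via $\mathsf{GLift.up}_{\ell',\ell}$ and $\mathsf{GLift.down}_{\ell',\ell}$ and the same inductive definitions of $\mathsf{Up}_{\alpha\to\beta}$ and $\mathsf{Down}_{\alpha\to\beta}$ — so the approach matches; you simply spell out the bijectivity/congruence verifications and the universe bookkeeping that the paper leaves implicit.
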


  The rationale of $\mathsf{UpType} \ (\alpha \to \beta) := \mathsf{UpType} \ \alpha \to \mathsf{UpType} \ \beta$
  is that, given $f \ x$ in the canonical embedding of $\text{HOL}^*$, where $f : \alpha \to \beta$ and
  $x : \alpha$, we would like $(\mathsf{Up}_{\alpha \to \beta} \ f) \ (\mathsf{Up}_\alpha \ x)$ to be type correct.

  Given $\mathsf{Up}_s$, $\mathsf{Down}_s$ and $\mathsf{UpType}$ satisfying Theorem \ref{uliftupdownthm}
  with $\ell$ taken to be larger than all universe levels in the input terms,
  the universe lifting procedure, or the translation of the canonical embedding of $\text{HOL}^*$ into
  the $\ell$-embedding of HOL, denoted as $\mathsf{ULiftTrans}$, works as follows:
  
  \begin{enumerate}
    \item If $x$ is a variable and $x : s$, then $\mathsf{ULiftTrans}(x) := x'$.
      If $x$ is not a free variable, then we define $x'$ as $x' := \mathsf{Up}_s \ x$ in Lean 4.
      If $x$ is a bound variable, no further operation is needed.
    \item $\mathsf{ULiftTrans}(f \ x) := \mathsf{ULiftTrans}(f) \ \mathsf{ULiftTrans}(x)$
    \item $\mathsf{ULiftTrans}(\lambda (x : s).\ y) := \lambda (x' : \mathsf{UpType} \ s).\ \mathsf{ULiftTrans}(y)$
  \end{enumerate}

  It's easy to verify that $\mathsf{ULiftTrans}(e)$ is definitionally equal to $\mathsf{Up}_s \ e$
  for all terms $e : s$ in the canonical embedding of $\text{HOL}^*$.
\fi

\maybeappendix{Essentially Higher-order Problem}\label{fmehop}
\ifdefined\cameraReady
\else
  In this appendix, we give a formal definition of essentially higher-order problems (EHOPs) and
  discuss some of its theoretical properties.
  
  \begin{definition} Let $\sigma : V \to \mathcal{T}_C$ be a mapping.
    Define its extension $\overline{\sigma} : \mathcal{T}_C \to \mathcal{T}_C$ as
    $$\overline{\sigma}(\mathsf{U}_\ell) := \mathsf{U}_\ell \ \ \ \ \ \
      \overline{\sigma}(x) := \sigma(x), \text{ for }x \in V \ \ \ \ \ \
      \overline{\sigma}(M \ N) := \overline{\sigma}(M) \ \overline{\sigma}(M)$$
    $$\overline{\sigma}(\lambda x : s. M) := \lambda x : \overline{\sigma}(s). \overline{\sigma[x \mapsto x]}(M)$$
    $$\overline{\sigma}(\forall x : s. M) := \forall x : \overline{\sigma}(s). \overline{\sigma[x \mapsto x]}(M)$$
    where
    $$\sigma[u \to t](x) := \left\{\begin{aligned}
      & t & & x = u \\
      & \sigma(x) & & x \in V \backslash \{u\}
    \end{aligned}\right.$$
  \end{definition}
  
  \begin{definition}
    A substitution is a triple $(\Gamma, \Gamma', \sigma)$ where $\Gamma, \Gamma'$ are $\lambda C$ contexts
    and $\sigma : V \to \mathcal{T}_C$, such that for all $(u : \tau) \in \Gamma$,
    $$\Gamma' \vdash \sigma(u) : \overline{\sigma}(\tau)$$
    $\Gamma$ is called the domain of the substitution, and $\Gamma'$ is called the codomain of the substitution.
  \end{definition}
  
  \begin{theorem}
    Let $(\Gamma, \Gamma', \sigma)$ be a substitution. If $\Gamma \vdash t : s$, then $\Gamma' \vdash \overline{\sigma}(t) : \overline{\sigma}(s)$
  \end{theorem}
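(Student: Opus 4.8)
The statement is the standard substitution lemma for a pure type system, so the plan is to argue by induction on the derivation of $\Gamma \vdash t : s$, using the PTS rules of \refappendix{apppts}. The idea is to show that each rule is ``stable under $\overline{\sigma}$'': whenever a rule derives $\Gamma \vdash t : s$ from some premises, the $\overline{\sigma}$-images of those premises (derivable by the induction hypothesis, after possibly replacing $\sigma$ by a suitable extension) recombine under the \emph{same} rule to give $\Gamma' \vdash \overline{\sigma}(t) : \overline{\sigma}(s)$. Two observations keep the bookkeeping light. First, $\overline{\sigma}$ fixes every sort, so $\overline{\sigma}(s') = s'$ for $s' \in \mathcal{S}$; this discharges the side conditions $(s_1,s_2)\in\mathcal{A}$ and $(s_1,s_2,s_3)\in\mathcal{R}$ in the axiom and product rules for free. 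Second, for the rules that introduce a new variable (start, weakening, product, abstraction), the induction hypothesis must be applied to a premise typed in a context $\Gamma, x : A$; there I will invoke the theorem with the extended triple $(\Gamma, x : A;\ \Gamma', x : \overline{\sigma}(A);\ \sigma[x \mapsto x])$, after checking it is again a substitution: the new declaration goes to $x : \overline{\sigma[x\mapsto x]}(A) = x : \overline{\sigma}(A)$, which is derivable by the start rule applied to the IH on the first premise, and the old declarations of $\Gamma$ are handled by the original substitution property followed by a weakening. This step uses the usual freshness convention, so that $x$ does not occur free in $\Gamma$ or in the types being substituted into and hence $\overline{\sigma[x\mapsto x]}$ agrees with $\overline{\sigma}$ on them.

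The genuinely technical content sits in the application and conversion rules. For application, from $\Gamma \vdash f : \forall x{:}A.\,B$ and $\Gamma \vdash a : A$ the IH gives $\Gamma' \vdash \overline{\sigma}(f) : \forall x{:}\overline{\sigma}(A).\,\overline{\sigma[x\mapsto x]}(B)$ and $\Gamma' \vdash \overline{\sigma}(a) : \overline{\sigma}(A)$, so the application rule yields $\Gamma' \vdash \overline{\sigma}(f)\,\overline{\sigma}(a) : \overline{\sigma[x\mapsto x]}(B)[x := \overline{\sigma}(a)]$. To finish this case — and, identically, to match the annotation in the abstraction rule — I need the \emph{substitution-commutation lemma}
$$\overline{\sigma}\bigl(B[x := a]\bigr) \;=\; \overline{\sigma[x\mapsto x]}(B)\,[x := \overline{\sigma}(a)],$$
which I would prove as a separate induction on the structure of $B$, the only delicate point being capture-avoidance (choosing the bound variables of $B$ fresh for $\sigma$ and for $a$). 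For the conversion rule, from $\Gamma \vdash A : B$, $\Gamma \vdash B' : s'$ and $B \cong B'$ the IH gives $\Gamma' \vdash \overline{\sigma}(A) : \overline{\sigma}(B)$ and $\Gamma' \vdash \overline{\sigma}(B') : \overline{\sigma}(s') = s'$, and applying the conversion rule requires $\overline{\sigma}(B) \cong \overline{\sigma}(B')$. This follows from a small auxiliary fact: $\overline{\sigma}$ commutes with a single $\beta$- or $\eta$-step (the $\beta$ case again uses the substitution-commutation lemma), hence sends $\cong$-related terms to $\cong$-related terms.

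The main obstacle I expect is nothing deep but rather getting the capture-avoidance discipline right: both the extended-substitution step under binders and the substitution-commutation lemma hinge on renaming bound variables to be fresh for the (in general infinitely supported) map $\sigma$, and phrasing this cleanly is the fiddly part of an otherwise routine structural induction. One further point I would flag explicitly: the axiom case produces $\emptyset \vdash s_1 : s_2$, whereas we need $\Gamma' \vdash s_1 : s_2$, so $\Gamma'$ must be a well-formed context; this holds in every situation where the theorem is used here (the codomain of a substitution arising in the translation is always a well-formed $\lambda C$ context) and is supplied by a standard thinning lemma, so I would either carry well-formedness of $\Gamma'$ as a running hypothesis or build it into the definition of substitution.
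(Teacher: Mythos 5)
Your proposal is correct and follows exactly the route the paper takes: the paper's proof is just the one-line remark ``Induction on the derivation of $\Gamma \vdash t : s$,'' and your argument is a faithful, more detailed elaboration of that same structural induction (with the standard auxiliary facts — the substitution-commutation lemma, preservation of $\cong$ under $\overline{\sigma}$, and well-formedness of the codomain context for the axiom case — that the paper leaves implicit).
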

  \begin{proof} Induction on the derivation of $\Gamma \vdash t : s$.
  \end{proof}
  
  \begin{definition} Let $\Gamma$ be a $\lambda C$ context and $t_1, t_2$ be $\lambda C$ terms.
    If variable set $M$ and substitution $(\Gamma, \Gamma', \sigma)$ satisfies
    \begin{enumerate}
      \item There exists a $\lambda C$ term $s$ such that $\Gamma' \vdash \overline{\sigma}(t_1) : s$ and $\Gamma' \vdash \overline{\sigma}(t_2) : s$.
      \item $\overline{\sigma}(t_1) \cong \overline{\sigma}(t_2)$ (i.e., $\overline{\sigma}(t_1)$ and
        $\overline{\sigma}(t_2)$ are $\beta\eta$-equivalent)
      \item For all variables $v \in \Gamma \backslash M$, $\sigma(v) = v$.
    \end{enumerate}
    Then $(\Gamma, \Gamma', \sigma)$ is called a $M$-unifier of $t_1$ and $t_2$. In the context of Lean,
    this corresponds to a unifier of $t_1$ and $t_2$ under context $\Gamma$, with $M$ as the set of metavariables.
  \end{definition}
  
  \begin{definition} The canonical embedding $\pi^* : \mathcal{T}_\to^* \to \mathcal{T}_C$ of $\text{HOL}^*$ into $\lambda C$ is defined as follows:
    $$\begin{aligned}
    & \pi^*(\mathsf{Bool}) := \mathsf{U}_0 \ \ \ \ \ \ \pi^*(\mathsf{U}_\ell) := \mathsf{U}_\ell \ \ \ \ \ \
      \pi^*(\mathsf{U}_\ell') := \mathsf{U}_{\ell + 1} \ \ \ \ \ \ \pi^*(x) := x, \text{ for } x \in V \\
      & \pi^*(M \ N) := \pi^*(M) \ \pi^*(N) \ \ \ \ \ \ \pi^*(\lambda (x : s). M) := \lambda (x : \pi^*(s)). \pi^*(M) \\
    & \pi^*(\bot') := \forall (\alpha : \mathsf{U}_0). \alpha \ \ \ \ \ \ \ \ \ \
    \pi^*(\to') := \lambda (p \ q : \mathsf{U}_0). p \to q \\
    & \pi^*(\forall_s') := \lambda (p : \pi^*(s) \to \mathsf{U}_0). \forall (x : \pi^*(s)). p \ x
    \end{aligned}$$
  
    \noindent $\pi^*$ is extended to contexts as follows: $\pi^*(\emptyset) := \emptyset, \pi^*(\Gamma, x : \sigma) := \pi^*(\Gamma), x : \pi^*(\sigma)$
  
  
  \end{definition}
  
  \begin{theorem}\label{ceptj} Canonical embedding preserves judgement, i.e. if $\Gamma \vdash t : s$ in $\text{HOL}^*$, then
    $\pi^*(\Gamma) \vdash \pi^*(t) : \pi^*(s)$ in $\lambda C$ \end{theorem}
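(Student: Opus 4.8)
The plan is to prove the statement by induction on the derivation in $\text{HOL}^*$, case-splitting on the last rule applied: the seven generic PTS rules (axioms, start, weakening, product, application, abstraction, conversion) instantiated for $\lambda_\to^*$, plus the four typing rules of $\text{HOL}^*$ for $\mathsf{Bool}$, $\bot'$, $\to'$, and $\forall'_s$. Before the induction I would prove two routine lemmas about $\pi^*$. First, a substitution-commutation lemma $\pi^*(M[x := N]) = \pi^*(M)[x := \pi^*(N)]$, by structural induction on $M$; the only non-trivial clause is $\forall'_\tau$, where $\forall'_\tau[x := N] = \forall'_{\tau[x:=N]}$ and the identity holds because $\pi^*(\forall'_\tau) = \lambda (p : \pi^*(\tau) \to \mathsf{U}_0).\,\forall (y : \pi^*(\tau)).\,p\,y$ mentions $\tau$ only through $\pi^*(\tau)$. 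Second, a lemma that $\pi^*$ preserves $\beta\eta$-equivalence: if $B \cong B'$ then $\pi^*(B) \cong \pi^*(B')$ in $\lambda C$. This follows by showing that a one-step $\beta\eta$-reduction $B \to_{\beta\eta} B'$ induces a possibly multi-step reduction from $\pi^*(B)$ to $\pi^*(B')$ — the multiplicity arises because a redex can sit inside the subscript of a $\forall'_s$, whose image contains two copies of $\pi^*(s)$ — together with the fact that $\beta\eta$-steps are contained in $\cong$. I would also note, as an immediate consequence of the induction via the start and weakening rules, that $\pi^*$ sends well-formed $\text{HOL}^*$-contexts to well-formed $\lambda C$-contexts, and that it maps every sort of $\text{HOL}^*$ to a sort of $\lambda C$ ($\pi^*(\mathsf{U}_\ell) = \mathsf{U}_\ell$, $\pi^*(\mathsf{U}_\ell') = \mathsf{U}_{\ell+1}$).

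For the generic PTS rules the argument is essentially bookkeeping. The $\text{HOL}^*$ axiom $\mathsf{U}_\ell : \mathsf{U}_\ell'$ maps to $\mathsf{U}_\ell : \mathsf{U}_{\ell+1}$, which is an axiom of $\lambda C$. Start and weakening transfer verbatim once the induction hypothesis is applied, using that $\pi^*$ of a sort is a sort. For product, a rule triple $(\mathsf{U}_\ell, \mathsf{U}_m, \mathsf{U}_{\max(\ell,m)})$ of $\lambda_\to^*$ maps to $(\mathsf{U}_\ell, \mathsf{U}_m, \mathsf{U}_{\mathsf{imax}(\ell,m)})$; since every universe level appearing in $\lambda_\to^*$ is positive, $\mathsf{imax}(\ell,m) = \max(\ell,m)$, so this is a rule of $\lambda C$ — here I also use the clause $\pi^*(\forall x : A.\,B) = \forall x : \pi^*(A).\,\pi^*(B)$, which is needed for products and applications and which I take to be part of the definition of $\pi^*$ even though it is not displayed above. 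For application, $\pi^*(B[x:=a]) = \pi^*(B)[x := \pi^*(a)]$ by the substitution lemma. Abstraction transfers directly. For conversion, applying the induction hypothesis to both premises and the $\cong$-preservation lemma to $B \cong B'$ lets us retype $\pi^*(t)$ along $\pi^*(B) \cong \pi^*(B')$.

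The remaining four cases are handled by writing down explicit $\lambda C$ derivations of the images. We have $\pi^*(\mathsf{Bool}) = \mathsf{U}_0$ and $\mathsf{U}_0 : \mathsf{U}_1 = \pi^*(\mathsf{U}_1)$. $\pi^*(\bot') = \forall (\alpha : \mathsf{U}_0).\,\alpha$ has type $\mathsf{U}_0 = \pi^*(\mathsf{Bool})$ by the product rule $(\mathsf{U}_1, \mathsf{U}_0, \mathsf{U}_0)$ (note $\mathsf{imax}(1,0) = 0$). $\pi^*(\to') = \lambda (p\ q : \mathsf{U}_0).\,p \to q$ type-checks to $\mathsf{U}_0 \to \mathsf{U}_0 \to \mathsf{U}_0 = \pi^*(\mathsf{Bool} \to \mathsf{Bool} \to \mathsf{Bool})$. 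For $\forall'_s$ we apply the induction hypothesis to the premise $\Gamma \vdash s : \mathsf{U}_\ell$ to obtain $\pi^*(\Gamma) \vdash \pi^*(s) : \mathsf{U}_\ell$, and then check that $\pi^*(\forall'_s) = \lambda (p : \pi^*(s) \to \mathsf{U}_0).\,\forall (x : \pi^*(s)).\,p\,x$ has type $(\pi^*(s) \to \mathsf{U}_0) \to \mathsf{U}_0 = \pi^*((s \to \mathsf{Bool}) \to \mathsf{Bool})$, the side-conditions on product formation being discharged by the universe arithmetic (using $\mathsf{imax}(\ell, 0) = 0$ for the inner quantifier).

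I expect no conceptual difficulty here; the main obstacle is simply the amount of small bookkeeping. In particular, the substitution lemma and the $\cong$-preservation lemma must be stated carefully enough to cover the three added constants whose $\pi^*$-images are themselves compound terms with bound variables, and one must be sure that $\pi^*$ never creates new free variables or captures a variable, so that the conversion case and the substitution lemma are sound. It is also worth completing the displayed definition of $\pi^*$ with its product clause before running the induction, since that clause is invoked implicitly throughout.
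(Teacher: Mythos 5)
Your proposal is correct and follows exactly the route the paper takes: the paper's proof is simply ``induction on the derivation rules of $\text{HOL}^*$,'' and your write-up fills in the routine details (substitution and $\cong$-preservation lemmas, the universe-level arithmetic with $\mathsf{imax}$, and the explicit typings of the images of $\mathsf{Bool}$, $\bot'$, $\to'$, $\forall'_s$) that the paper leaves implicit.
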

  \begin{proof} Induction on the derivation rules of $\text{HOL}^*$. \end{proof}
  
  \begin{definition} An ($\text{HOL}^*/\lambda C$) problem is a tuple $(\Gamma, p)$, denoted
    as $\Gamma \vdash? p$, where $\Gamma$ is a ($\text{HOL}^*/\lambda C$)
    context, called the hypotheses of the problem, and $p$ is an
    ($\text{HOL}^*/\lambda C$) term, called the goal of the problem. A $\lambda C$ problem
    $\Gamma \vdash? p$ is provable iff there exists a $\lambda C$ term $t$ such that
    $\Gamma \vdash t : p$. An $\text{HOL}^*$ problem $\Gamma \vdash? p$ is provable iff there exists
    a $\lambda C$ term $t$ such that $\pi^*(\Gamma) \vdash t : \pi^*(p)$.
  \end{definition}
  
  \begin{definition} A $\lambda C$ problem $\Gamma \vdash?  p$ is essentially higher-order provable (EH\-OP)
    iff there exists a provable $\text{HOL}^*$ problem $\Gamma' \vdash? p'$ and a substitution
    $(\pi^*(\Gamma'),\allowbreak\Gamma, \sigma)$ such that $p \cong \overline{\sigma}(\pi^*(p'))$.\footnote{This
    is the same as Definition \ref{EHOPInMain} in Sect. \ref{sectabst}.}
  \end{definition}
  
  \begin{theorem}
    If a $\lambda C$ problem $\Gamma \vdash? p$ is EHOP, then it is provable.
  \end{theorem}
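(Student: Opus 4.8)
The plan is to unfold the definitions and then chain the substitution-preserves-judgement theorem with the conversion rule of $\lambda C$. Suppose $\Gamma \vdash? p$ is EHOP. By definition there is a provable $\text{HOL}^*$ problem $\Gamma' \vdash? p'$, together with a substitution $(\pi^*(\Gamma'), \Gamma, \sigma)$ such that $p \cong \overline{\sigma}(\pi^*(p'))$. Unfolding provability of the $\text{HOL}^*$ problem yields a $\lambda C$ term $t$ with $\pi^*(\Gamma') \vdash t : \pi^*(p')$. The goal is to exhibit a $\lambda C$ term inhabiting $p$ under $\Gamma$, and the natural candidate is $\overline{\sigma}(t)$.

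First I would apply the theorem stated immediately after the definition of substitution (substitution preserves judgement) to $\pi^*(\Gamma') \vdash t : \pi^*(p')$, using the substitution $(\pi^*(\Gamma'), \Gamma, \sigma)$. This gives $\Gamma \vdash \overline{\sigma}(t) : \overline{\sigma}(\pi^*(p'))$. So $\overline{\sigma}(t)$ already inhabits $\overline{\sigma}(\pi^*(p'))$, a $\beta\eta$-equivalent of $p$, and it only remains to transport it along this equivalence.

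The transport step uses the conversion rule of $\lambda C$: from $\Gamma \vdash \overline{\sigma}(t) : \overline{\sigma}(\pi^*(p'))$, $\overline{\sigma}(\pi^*(p')) \cong p$, and $\Gamma \vdash p : s$ for some sort $s$, one concludes $\Gamma \vdash \overline{\sigma}(t) : p$, hence $\Gamma \vdash? p$ is provable. The side condition $\Gamma \vdash p : s$ holds because the goal of a $\lambda C$ problem is a proposition under its context, i.e. $\Gamma \vdash p : \mathsf{U}_0$; alternatively, one first applies the standard PTS type-correctness property to $\pi^*(\Gamma') \vdash t : \pi^*(p')$ to obtain a sort $s$ with $\pi^*(\Gamma') \vdash \pi^*(p') : s$ (in $\lambda C$ there is no top sort, so such an $s$ always exists), then pushes this judgement through the substitution theorem, using that $\overline{\sigma}$ fixes sorts, to get $\Gamma \vdash \overline{\sigma}(\pi^*(p')) : s$, which suffices to convert.

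The main obstacle is precisely this last conversion step: one must be careful that well-formedness of the goal is in hand, since conversion in a PTS rewrites only the type and only when that type is independently known to be well-formed — $\beta\eta$-equivalence alone does not propagate well-typedness. Everything else is a routine unfolding of the definitions of EHOP, substitution, and provability, plus a single application of the already-established substitution-preserves-judgement theorem.
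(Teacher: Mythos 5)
Your proposal is correct and follows essentially the same route as the paper: unfold EHOP and $\text{HOL}^*$ provability to get $t$ with $\pi^*(\Gamma') \vdash t : \pi^*(p')$, push it through the substitution-preserves-judgement theorem to obtain $\Gamma \vdash \overline{\sigma}(t) : \overline{\sigma}(\pi^*(p'))$, and conclude via $p \cong \overline{\sigma}(\pi^*(p'))$. The only difference is that you spell out the conversion step and its well-formedness side condition, which the paper leaves implicit (and you correctly invoke the substitution theorem where the paper's citation points, apparently by a slip, to the canonical-embedding theorem).
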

  \begin{proof} By the definition of EHOP, there exists a provable $\text{HOL}^*$ problem
    $\Gamma' \vdash? p'$ and substitution $(\pi^*(\Gamma'), \Gamma, \sigma)$ such that
    $p \cong \overline{\sigma}(\pi^*(p'))$. By the definition of $\text{HOL}^*$ provability, there exists
    a term $t'$ such that $\pi^*(\Gamma') \vdash t' : \pi^*(p')$. By Theorem \ref{ceptj},
    $\Gamma \vdash \overline{\sigma}(t') : \overline{\sigma}(\pi^*(p'))$, thus $\Gamma \vdash? p$
    is provable.
  \end{proof}
  
  We assume that excluded middle is implicitly contained in the hypotheses of all
  $\text{HOL}^*$ and $\lambda C$ problems. In $\lambda C$, excluded middle is
  $\mathsf{em} : \forall (p : \mathsf{U}_0), p \lor \neg p$; in $\text{HOL}^*$, it is
  $\mathsf{em}' : \forall (p : \mathsf{Bool}). p \lor' \neg' p$.
  
  \begin{example} Consider the $\lambda C$ problem $\Gamma \vdash? p$ where
  \begin{align*}
    \Gamma := \ & \mathbb{N} : \mathsf{U}_1, \mathsf{Fin} : \mathbb{N} \to \mathsf{U}_1,
    \mathsf{add} : \forall (n : \mathbb{N}). (\mathsf{Fin} \ n \to \mathsf{Fin} \ n \to \mathsf{Fin} \ n), n : \mathbb{N} \\
    p := \ & (\forall (u \ v : \mathsf{Fin} \ n). \mathsf{add} \ n \ u \ v =_1 \mathsf{add} \ n \ v \ u) \to \\
    & \ \ \ \forall (u \ v \ w : \mathsf{Fin} \ n). \mathsf{add} \ n \ (\mathsf{add} \ n \ x \ y) \ z =_1 \mathsf{add} \ n \ z \ (\mathsf{add} \ n \ y \ x)
  \end{align*}
  Given
  \begin{align*}
    \Gamma' := \ & \alpha : \mathsf{U}_1, f : \alpha \to \alpha \to \alpha \\
    p' := \ & (\forall' (u \ v : \alpha). f \ u \ v =_\alpha' f \ v \ u) \to' \\
    & \ \ \ \forall' (u \ v \ w : \alpha). f \ (f \ u \ v) \ w =_\alpha' f \ w \ (f \ v \ u)
  \end{align*}
  The $\text{HOL}^*$ problem $\Gamma' \vdash? p'$ is provable. Moreover, given
  $$\sigma(\alpha) := \mathsf{Fin} \ n, \sigma(f) := \mathsf{add} \ n$$
  The triple $(\pi^*(\Gamma'), \Gamma, \sigma)$ forms a substitution, and $p \cong \overline{\sigma}(\pi^*(p'))$.
  Therefore, $\Gamma \vdash? p$ is EHOP.
  \end{example}
  
  Note that moving implications in the goal into hypotheses (and vice versa) may
  change the EHOP status of a problem. For example,
  $$\alpha : \mathsf{U}_1, x : \alpha, p : \alpha \to \mathsf{U}_0  \ \vdash? \ (p \ x \to p \ x)$$
  is EHOP. However, if we introduce $p \ x$ into the hypotheses, the problem is no longer EHOP:
  \begin{equation}\label{hypnehop}
    \alpha : \mathsf{U}_1, x : \alpha, p : \alpha \to \mathsf{U}_0, h : p \ x  \ \vdash? \ p \ x
  \end{equation}
  
  \begin{theorem}
    The $\lambda C$ problem \eqref{hypnehop} is provable but not EHOP.
  \end{theorem}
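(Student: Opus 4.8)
The plan is to handle the two halves separately. Provability is immediate: the hypothesis $h$ is already a proof. One first checks that the context in \eqref{hypnehop} is well-formed — $\alpha:\mathsf U_1$ by an axiom, then $x:\alpha$ and $p:\alpha\to\mathsf U_0$ by the start and product rules, and $p\ x:\mathsf U_0$ by the application rule, so $h:p\ x$ may be declared — and then the start/weakening rules give $\alpha:\mathsf U_1, x:\alpha, p:\alpha\to\mathsf U_0, h:p\ x \vdash h:p\ x$. Hence the problem is provable (it remains so after adding the implicitly assumed excluded middle, which plays no role here).

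For non-EHOP-ness I would argue by contradiction. Suppose there were a provable $\text{HOL}^*$ problem $\Gamma'\vdash? p'$ and a substitution $(\pi^*(\Gamma'),\Gamma,\sigma)$ with $p\ x\cong\overline\sigma(\pi^*(p'))$; note $p':\mathsf{Bool}$, since $\pi^*(p')\cong p\ x$ has type $\mathsf U_0$. The structural fact I would exploit is that $\mathsf{Bool}$ is not one of the sorts of $\lambda_\to^*$, so the start and weakening rules can never introduce into a $\text{HOL}^*$ context a variable whose type is a proposition; hence $\Gamma'$ contains no "proof hypotheses" beyond the implicitly assumed $\mathsf{em}'$, and $\mathsf{em}'$ does not constrain any atom. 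Now put $p'$ in $\beta\eta$-normal ($\eta$-long) form; since $p':\mathsf{Bool}$, its head is either (a) a free variable $z$ of $\Gamma'$, of some type $\tau_1\to\cdots\to\tau_k\to\mathsf{Bool}$, so that $p'\cong z\ e_1\cdots e_k$; or (b) one of the primitive logical constants $\bot',\to',\forall'_s$ (the derived connectives $\neg',\land',\lor',\leftrightarrow',='_s,\exists'_s$ all unfold to terms headed by these).

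In case (a) the problem $\Gamma'\vdash? z\ \vec e$ is simply not provable in $\text{HOL}^*$: substituting the constantly-false predicate $z\mapsto\lambda\vec w.\,(\forall(\gamma:\mathsf U_0).\gamma)$ into a putative $\lambda C$ proof $\pi^*(\Gamma')\vdash t : z\ \pi^*(\vec e)$ turns it into a proof of $\forall(\gamma:\mathsf U_0).\gamma$ in the context $\pi^*(\Gamma')$ with $z$ deleted — impossible, since that context is a $\text{HOL}^*$ context plus excluded middle and hence consistent — contradicting provability. In case (b), pushing $\pi^*$ inward and $\beta$-reducing rewrites $\pi^*(p')$ to a term whose head is a $\forall$-binder: $\pi^*(\bot')=\forall(\gamma:\mathsf U_0).\gamma$, $\pi^*(\to'\ e_1\ e_2)\cong\pi^*(e_1)\to\pi^*(e_2)$, and $\pi^*(\forall'_s\ e)\cong\forall(y:\pi^*(s)).\pi^*(e)\ y$. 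Since $\overline\sigma$ sends $\forall$-terms to $\forall$-terms and commutes with $\beta\eta$-reduction, $\overline\sigma(\pi^*(p'))$ is $\beta\eta$-equal to a $\forall$-headed term, whereas the $\beta\eta$-normal form of $p\ x$ is the neutral term $p\ x$; by confluence and strong normalization of well-typed $\lambda C$ terms these are not $\beta\eta$-equal, contradicting $p\ x\cong\overline\sigma(\pi^*(p'))$. Either branch yields a contradiction, so \eqref{hypnehop} is not EHOP.

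The step I expect to be the main obstacle is the consistency argument that closes case (a): one must be sure that deleting $z$ leaves a well-formed and consistent $\lambda C$ context, which rests on the facts that $\text{HOL}^*$ types never mention term variables (so no other declaration of $\pi^*(\Gamma')$ depends on $z$) and that $\text{HOL}^*$, even with excluded middle, is consistent. A secondary point needing care is the bookkeeping in case (b): because $p'$ is well-typed of type $\mathsf{Bool}$, the head logical constant is fully applied, so the $\pi^*$-encodings genuinely $\beta$-reduce all the way down to a $\forall$-headed form rather than getting stuck half-applied.
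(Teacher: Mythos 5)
Your proof is correct and follows essentially the same route as the paper's: provability is witnessed by $h$ itself, and non-EHOP-ness follows because the $\beta\eta$-normal form of $p'$ must be headed by a free variable of $\Gamma'$, which (since an $\text{HOL}^*$ context carries no propositional premises beyond excluded middle) can be instantiated by the constantly-false predicate to refute provability of $\Gamma' \vdash? p'$. You supply two details the paper leaves implicit --- the case analysis ruling out a logical-constant head via the $\forall$-headed shapes of $\pi^*(\bot')$, $\pi^*(\to')$, $\pi^*(\forall'_s)$, and a syntactic substitution-into-the-proof-term argument in place of the paper's appeal to a falsifying model --- and both refinements are sound.
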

  \begin{proof}
    Note that $h : p \ x$ under the hypotheses of \eqref{hypnehop}, thus \eqref{hypnehop} is provable.
    To show that \eqref{hypnehop} is not EHOP, we use proof by contradiction. Suppose there
    is an $\text{HOL}^*$ problem $\Gamma' \vdash? p'$ and a substitution $(\Gamma', \Gamma, \sigma)$ such
    that $p \ x \cong \overline{\sigma}(\pi^*(p'))$. Then, the $\beta\eta$ normal form of $p'$ must be of the form
    $f \ t_1 \ \dots \ t_k$ where $f$ is a free variable. Note that $\Gamma'$, as a context of $\lambda_\to^*$,
    consists solely of $\text{HOL}^*$ (type or term) variable declarations, and cannot contain
    premises like $\lambda C$ contexts. Note that there exists models where $f \ t_1 \ \dots \ t_k$
    is false, for example when $f$ is a function that takes $k$ arguments and always returns $\bot$.
    Therefore, $\Gamma' \vdash? p'$ is not provable in $\text{HOL}^*$, thus $(2)$ is not EHOP.
  \end{proof}
\fi

\maybeappendix{$\lambda_\to^*$ Abstraction Algorithm}\label{labstalgo}
\ifdefined\cameraReady
\else
  In this appendix, we give a formal presentation of the $\lambda_\to^*$ abstraction algorithm. When given a
  $\lambda C$ problem $\Gamma \vdash? p$, the algorithm attempts to find a $\lambda_\to^*$
  problem $\Gamma' \vdash? p'$ and a substitution $(\pi^*(\Gamma'), \Gamma, \sigma)$ such that
  $p \cong \overline{\sigma}(\pi^*(p'))$, and that $p'$ retains as much information in $p$ as possible.
  
  Note that the output of Lean-auto's quantifier instantiation is a list of $\lambda C$ terms
  $h_1, \dots, h_n$, and we would like to prove $\bot$ using these terms. Suppose
  the $\lambda C$ context of the problem is $\Gamma$. According to the above
  discussion, the input to the $\lambda_\to^*$ abstraction algorithm should be
  $\Gamma \vdash ? \ (h_1 \to \dots \to h_n \to \bot)$. In practice, we run
  $\lambda_\to^*$ abstraction consecutively on each of $h_i (1 \leq i \leq n)$ under
  context $\Gamma$, which produces equivalent results. Therefore, we can either think
  of the input of $\lambda_\to^*$ abstraction as one $\lambda C$ term $h_1 \to \dots \to h_n \to \bot$,
  or as a list of $\lambda C$ terms $h_1, \dots, h_n$.

  First, we give a formal definition of \textit{dependent arguments}. This definition accounts for
  the fact that dependent arguments are dynamic. Note that in the argument list of functions,
  dependent and non-dependent arguments may interleave with each other.
  
  \begin{definition} Suppose $\Gamma \vdash s : \mathsf{U}_l$ in $\lambda C$.
    If $s = (\forall (x : s_1). s_2)$ and $x$ occurs in $s_2$,
    then $s$ is said to be a $\Gamma$-leading argument dependent type,
    denoted as $\mathsf{LADT}(\Gamma; s)$. Suppose $\Gamma \vdash t : s$ in $\lambda C$, where $s$
    is in $\beta$ normal form. If $\mathsf{LADT}(\Gamma; s)$, then $t$ is said to be
    $\Gamma$-leading argument dependent ($\Gamma$-lad), denoted as $\mathsf{LAD}(\Gamma; t)$.
  \end{definition}
  
  \begin{definition} Suppose the term $a_0 \ a_1 \ \dots \ a_k$ is type correct under
    context $\Gamma$ in $\lambda C$. Then for $1 \leq i \leq k$, $a_0$ is said to have dependent $i$-th argument
    with respect to $\Gamma$ and argument list $(a_1, \dots, a_k)$, or $i$-dep w.r.t
    $\Gamma$ and $(a_1, \dots, a_k)$, iff $\mathsf{LAD}(\Gamma; a_0 \ a_1 \ \dots \ a_{i - 1})$.
    For convenience, we use the predicate
    $$\mathsf{Dep}(\Gamma; a_0, (a_1, \dots, a_k), i) \ \ \ (k \geq 0, 1 \leq i \leq k)$$
    to denote that $a_0$ is $i$-dep w.r.t $\Gamma$ and $(a_1, \dots, a_k)$. Furthermore, we define
    $$\mathsf{LFun}(\Gamma; a_0, (a_1, \dots, a_k)) := \lambda (x_{i_1} : s_{i_1}) \dots (x_{i_m} : s_{i_m}). a_0 \ w_1 \ \dots \ w_m$$
    $$\mathsf{DArgs}(\Gamma; a_0, (a_1, \dots, a_k)) := (b_{i_1}, \dots, b_{i_m})$$
    $$\mathsf{LArgs}(\Gamma; a_0, (a_1, \dots, a_k)) := (a_{j_1}, \dots, a_{j_{k - m}})$$
    where $i_1 < i_2 < \dots < i_m$ are all the arguments that are dependent, $j_1 < j_2 < \dots < j_{k - m}$
    are all the arguments that are non-dependent, $\Gamma \vdash a_i : s_i$, and
    $$w_i := \left\{\begin{aligned}
      a_i, & & \mathsf{Dep}(\Gamma; a_0, (a_1, \dots, a_k), i) \\
      x_i, & & \text{otherwise}
    \end{aligned}\right.$$
  \end{definition}
  
  \begin{example} Let
    \begin{align*}
      \Gamma := & \ \mathsf{compose} : \forall (\beta \ \gamma: \mathsf{U}_1).
        (\beta \to \gamma) \to \forall (\alpha : \mathsf{U}_1). (\alpha \to \beta) \to (\alpha \to \gamma), \\
        & \ A : \mathsf{U}_1, B : \mathsf{U}_1, C : \mathsf{U}_1, f : B \to C, g : A \to B, x : A
    \end{align*}
    Then
    $$\mathsf{compose}, \mathsf{compose} \ B, \mathsf{compose} \ B \ C \ f$$
    are $\Gamma$-lad, while
    $$\mathsf{compose} \ B \ C, \mathsf{compose} \ B \ C \ f \ A, \mathsf{compose} \ B \ C \ f \ A \ g$$
    are not. Therefore, the dependent arguments of $\mathsf{compose}$ w.r.t $(B, C, f, A, g, x)$
    are $1, 2$ and $4$, and we have
    $$\mathsf{LFun}(\Gamma; \mathsf{compose}, (A, B, C, f, g, x)) = \lambda (f : B \to C). \mathsf{compose} \ A \ B \ f \ C$$
    $$\mathsf{LArgs}(\Gamma; \mathsf{compose}, (A, B, C, f, g, x)) = (f, g, x)$$
  \end{example}
  
  \begin{example} Let
    \begin{align*}
      \Gamma := \mathsf{func} : \forall (\alpha : \mathsf{U}_1 \to \mathsf{U}_1) \ (\beta : \mathsf{U}_1). \alpha \ \beta,
        A : \mathsf{U}_1, B : \mathsf{U}_1 
    \end{align*}
    Then $\mathsf{func}$ is $\Gamma$-lad, while
    $$\mathsf{func} \ (\lambda \beta. A) : \mathsf{U}_1 \to A \ \ \ \ \ \
    \mathsf{func} \ (\lambda \beta. A) \ B : A$$
    are not. Therefore, the dependent argument of $\mathsf{func}$ w.r.t $(\lambda \beta. A, B)$ is $1$, and
    we have
    $$\mathsf{LFun}(\Gamma; \mathsf{func}, (\lambda \beta. A, B)) = \mathsf{func} \ (\lambda \beta . A) \ \ \ \ \ \
    \mathsf{LArgs}(\Gamma; \mathsf{func}, (\lambda \beta. A, B)) = B$$
  \end{example}
  
  Now, we define \textit{quasi-monomorphic terms}, the set of $\lambda C$ terms
  that $\lambda_\to^*$ abstraction can successfully translate to $\text{HOL}^*$. The
  predicate $\mathsf{QMono}(\Gamma; B, t)$ will be used to represent ``$t$ is a quasi-monomorphic
  term under context $\Gamma$, with variables in $B$ being bound variables''. It
  is used both in $\lambda_\to^*$ abstraction and in quantifier instantiation.

  \begin{definition} We define the predicate $\mathsf{QMono}(\Gamma; B, t)$ inductively,
    where $\Gamma$ is a $\lambda C$ context, $B$ is a set of variables, and $t$ is a $\lambda C$ term
    \begin{enumerate}
      \item For variable $x \in B$ and terms $t_1, \dots, t_n$,
        \begin{align*}
          \mathsf{QMono}(\Gamma; B, x \ t_1 \dots \ t_n) := \ &
          \mathsf{DArgs}(\Gamma; x, (t_1 \ \dots \ t_n)) = \emptyset \land \\
          & \forall i \in \{1, \dots, n\}. \mathsf{QMono}(\Gamma; B, t_i)
        \end{align*}
      \item For variable $x \notin B$ and terms $t_1, \dots, t_n$,
        \begin{align*}
          \mathsf{QMono}(\Gamma; B, x \ t_1 \dots \ t_n) := \ & (\forall t \in \mathsf{DArgs}(\Gamma; x, (t_1, \dots, t_n)). FV(t) \cap B = \emptyset) \land \\
          & (\forall t \in \mathsf{LArgs}(\Gamma; x, (t_1, \dots, t_n)). \mathsf{QMono}(\Gamma; B, t))
        \end{align*}
      \item For variable $x$ and terms $s, t$
        \begin{align*}
          \mathsf{QMono}(\Gamma; B, \lambda (x : s). t) := \
          & FV(s) \cap B = \emptyset \land (\Gamma \not\vdash s : \mathsf{U}_0) \\
          & \land \mathsf{QMono}(\Gamma, x : s; B \cup \{x\}, t)
        \end{align*}
      \item For variable $x$ and terms $s, t$ such that $x \in FV(t)$,
        \begin{align*}
          \mathsf{QMono}(\Gamma; B, \forall (x : s). t) := \
          & \neg FV(s) \cap B = \emptyset \land (\Gamma \not \vdash s : \mathsf{U}_0) \land  (\Gamma \vdash t : \mathsf{U}_0) \land \\
          & \mathsf{QMono}(\Gamma, x : s; B \cup \{x\}, t)
        \end{align*}
      \item For terms $s, t$,
        \begin{align*}
          \mathsf{QMono}(\Gamma; B, s \to t) := \ & (\Gamma \vdash s : \mathsf{U}_0) \land (\Gamma \vdash t : \mathsf{U}_0) \land \\
          & \mathsf{QMono}(\Gamma; B, s) \land \mathsf{QMono}(\Gamma; B, t)
        \end{align*}
    \end{enumerate}
  \end{definition}

  \noindent According to the definition of $\mathsf{QMono}$, terms coming from canonical embedding of $\text{HOL}^*$
  terms are automatically quasi-monomorphic, e.g.
  $$\mathsf{QMono}(\alpha : \mathsf{U}_1, p : (\alpha \to \alpha) \to \mathsf{U}_0; \emptyset, \forall (p : \alpha \to \alpha). f \ p)$$
  Proofs are not allowed to be quantified by $\lambda$ or dependent $\forall$ binders:
  $$\neg \mathsf{QMono}(p : \mathsf{U}_0, q : p \to \mathsf{U}_0; \emptyset, \forall (x : p). q \ x)$$
  Occurrence of a dependently typed free variable does not break the quasi-monomorphic property iff
  its dependent arguments do not contain bound variables (assuming $B = \emptyset$):
  \begin{align*}
    \mathsf{QMono}(
    & \mathbb{N} : \mathsf{U}_1, \mathsf{Fin} : \mathbb{N} \to \mathsf{U}_1,
      \mathsf{add} : \forall (n : \mathbb{N}). \mathsf{Fin} \ n \to \mathsf{Fin} \ n \to \mathsf{Fin} \ n, k : \mathbb{N}; \\
    & \emptyset, \forall (x \ y : \mathsf{Fin} \ k). \mathsf{add} \ k \ x \ y = \mathsf{add} \ k \ y \ x)
  \end{align*}
  Occurrence of a dependently typed bound variable does not break the quasi-monomorphic property iff
  its dependent arguments are not instantiated:
  $$\mathsf{QMono}(\emptyset; \emptyset, \lambda (f : (\forall (\alpha : \mathsf{U}_0). \alpha) \to (\forall (\alpha : \mathsf{U}_0). \alpha)) \
    (x : \forall (\alpha : \mathsf{U}_0). \alpha). f \ x)$$
  Except for within type declarations of bound variables, bodies of $\forall$ abstractions must be propositions:
  $$\neg \mathsf{QMono}(\alpha : \mathsf{U}_1, \beta : \alpha \to \mathsf{U}_1; \emptyset, \forall (x : \alpha). \beta \ x)$$

  \begin{algorithm}\label{lamabst}
    \DontPrintSemicolon
    \SetNoFillComment
    \SetKwFunction{lamAbstraction}{\textsf{lamAbst}}
    \SetKwFunction{gvn}{\textsf{getLVarName}}
    \caption{$\lambda_\to^*$ abstraction algorithm of Lean-auto}
    \Fn{\lamAbstraction{$\Gamma; B, t$}}{
      \Input{$\lambda C$ context $\Gamma$, variable set $B$,
        and $\lambda C$ term $t$ satisfying $\mathsf{QMono}(\Gamma; B, t)$}
      \Output{a $\lambda_\to^*$ term}
      \Switch(\textbf{with}){t}{
        \Case(\tcc*[h]{Function application}){$a \ b$}{
          $f := \mathsf{getAppFn}(t)$ \;
          $\vargs := \mathsf{getAppArgs}(t)$ \;
          \uIf{$f \in B$}{
            \For{$a : \vargs$}{$a := \mathsf{lamAbst}(\Gamma; B, a)$}
            \Return $\mathsf{mkAppN}(f, \vargs)$
          }
          $\vlf := \mathsf{LFun}(\Gamma; f, \vargs)$ \;
          $\vlargs := \mathsf{LArgs}(\Gamma; f, \vargs)$ \;
          $\vlvar := \mathsf{getLVarName}(\vlf)$ \;
          \Return $\mathsf{mkAppN}(\vlvar, \vlargs)$
        }
        \Case{$\forall (v : a). b$}{
          $\vatype := \mathsf{inferType}(\Gamma; a)$ \;
          $\vbabst := \mathsf{lamAbst}(\Gamma, v : a; B \cup \{v\}, b)$ \;
          \uIf{$\vatype = \mathsf{U}_0$}{
            $\vaabst := \mathsf{lamAbst}(\Gamma; B, a)$ \;
            \Return $\vaabst \to \vbabst$
          }
          \Return $\forall (v : a). \vbabst$
        }
        \Case{$\lambda (v : a). b$}{
          $\vbabst := \mathsf{lamAbst}(\Gamma, v : a; B \cup \{v\}, b)$ \;
          \Return $\lambda (v : a). \vbabst$
        }
        \Other{\Return $\mathsf{getLVarName}(t)$}
      }
    }
    \;
    \Fn{\gvn{t}} {
      \Input{$\lambda C$ term $t$}
      \Output{$\lambda_\to^*$ variable name corresponding to $t$}
      \uIf{$H.\mathsf{contains}(t)$}{
        \Return $H.\mathsf{find}(t)$
      }
      $\vnewname := \mathsf{freshVarName}()$ \;
      $H.\mathsf{add}(t, \vnewname)$ \;
      \Return $\vnewname$ 
    }
  \end{algorithm}
  
  Now, we describe the $\lambda_\to^*$ abstraction procedure $\mathsf{lamAbst}$ of Lean-auto. The algorithm
  is shown in Algorithm \ref{lamabst}. A global hash map $H$ is used to record the $\text{HOL}^*$
  variables associated with abstracted $\lambda C$ terms. A few auxiliary functions are used in the algorithm:
  \begin{enumerate}
    \item For a term $t$, if $t$ is in $H$, then $\mathsf{getLVarName}(t)$ returns
      the $\text{HOL}^*$ free variable corresponding to $t$, otherwise it creates a new $\text{HOL}^*$ free variable for $t$.
    \item For a term $t = w \ t_1 \ \dots \ t_n$ where $w$ is not an application,
      $\mathsf{getAppFn}(t) = w, \mathsf{getAppArgs}(t) = (t_1, \dots, t_n)$.
    \item For terms $w, t_1, \dots, t_n$, $\mathsf{mkAppN}(w, (t_1, \dots, t_n)) = w \ t_1 \ \dots \ t_n$.
    \item For a context $\Gamma$ and a term $t$, $\mathsf{inferType}(\Gamma, t)$ computes the
      $\beta$-normal form of the type of $t$ under $\Gamma$.
  \end{enumerate}
  Note that $\mathsf{lamAbst}$ only returns the $\text{HOL}^*$ problem (as a $\text{HOL}^*$ term). The
  ``substitution'' from $\text{HOL}^*$ to $\lambda C$ needs to be obtained by computing the inverse of $H$ after
  the execution of the algorithm. Also, note that the implementation of this algorithm in Lean-auto checks
  whether $t$ breaks the requirements of quasi-monomorphic-ness and fails if it
  does. For simplicity, these checks have been omitted in $\mathsf{lamAbst}$.
  
\fi

\maybeappendix{Quantifier Instantiation}\label{appinstant}
\ifdefined\cameraReady
\else
  In this appendix, we present the technical details of Lean-auto's quantifier
  instantiation procedure. First, we give a formal definition of \textit{instance}:
  
  \begin{definition}
    Let $\Gamma$ be a $\lambda C$ context, and $t$ be a $\lambda C$ term which is type correct under $\Gamma$.
    \begin{enumerate}
      \item A constant instance of $t$ is a $\lambda C$ term of the
        form $\lambda (x_1 : s_1) \dots (x_m : s_m). t \ t_1 \ \dots \ t_k$ that is
        type correct under $\Gamma$, where $s_1, \dots, s_m, t_1, \dots t_k$ are $\lambda C$ terms.
      \item For $t = \forall (x_1 : r_1) \dots (x_n : r_n). b$, a hypothesis instance of
        $t$ is a $\lambda C$ term of the form $\forall (y_1 : s_1) \dots (y_m : s_m). b[t_1/x_1]\dots[t_n/x_n]$,
        where $s_1, \dots, s_m, t_1, \dots, t_n$ are $\lambda C$ terms, and $t_1[t_2/x]$ stands
        for the term obtained by replacing all the $x$ in $t_1$ with $t_2$.
    \end{enumerate}
  \end{definition}
  
  Unless otherwise stated, when discussing instances of functions, we will always be
  referring to \textit{constant instances}; when discussing instances of hypotheses, we will
  always be referring to \textit{hypothesis instances}.
  An instance of a function is called an $\text{HOL}^*$ instance iff all of the function's dependent arguments are instantiated with terms
  that do not contain bound variables. Formally, the set of all $\text{HOL}^*$ instances
  in a $\lambda C$ term is defined as follows:
  
  \begin{definition}
    Let $\Gamma$ be a $\lambda C$ context and $B$ be a set of variables, then
    \begin{enumerate}
      \item For variable $x$ and terms $t_1, \dots, t_n$,
        $$\mathsf{holInsts}(\Gamma; B, x \ t_1 \ \dots \ t_n) := \left\{
          \begin{aligned}
            S \cup \{l\}, & & FV(l) \cap B = \emptyset \\
            S, & & \text{otherwise}
          \end{aligned}
        \right.$$
        where
        $$l := \mathsf{LFun}(\Gamma; x, (t_1 \ \dots \ t_n)) \ \ \ \ \ \ S := \bigcup_{t \in \mathsf{LArgs}(\Gamma; x, (t_1, \dots, t_n))} \mathsf{holInsts}(\Gamma; V, t)$$
      \item For variable $x$ and terms $a, b$,
        \begin{align*}
          \mathsf{holInsts}(\Gamma; B, \forall (x : a). b) = \mathsf{holInsts}(\Gamma; B, \lambda (x : a). b) 
          \\ := \mathsf{holInsts}(\Gamma; B, a) \cup \mathsf{holInsts}(\Gamma, x : a; B \cup \{x\}, b)
        \end{align*}
      \item Otherwise, $\mathsf{holInsts}(\Gamma; B, t) := \emptyset$.
    \end{enumerate}
  \end{definition}
  
  \begin{algorithm}\label{matching}
    \DontPrintSemicolon
    \SetNoFillComment
    \SetKwFunction{matchFun}{\textsf{match}}
    \caption{Matching algorithm for quantifier instantiation}
    \Fn{\matchFun{$\Gamma; M, m, h$}}{
      \Input{$\lambda C$ context $\Gamma$, variable set $M$, and $\lambda C$ terms $m, h$}
      \Output{A set of unifiers}
      \Switch(\textbf{with}){h}{
        \Case(\tcc*[h]{Function application}){$a \ b$}{
          $\vmatches := \emptyset$ \;
          $f := \mathsf{getAppFn}(t)$ \;
          $\vargs := \mathsf{getAppArgs}(t)$ \;
          \For{$a : \vargs$}{$\vmatches := \mathsf{union}(\vmatches, \mathsf{match}(\Gamma; M, m, a))$}
          $\vlf := \mathsf{LFun}(\Gamma; f, arg)$ \;
          $\vmatches := \mathsf{union}(\vmatches, \mathsf{unify}(\Gamma; M, m, \vlf))$
        }
        \Case{$\forall (v : a). b$}{
          \Return $\mathsf{union}(\mathsf{match}(\Gamma; M, m, a), \mathsf{match}(\Gamma, v : a; M, m, b))$
        }
        \Case{$\lambda (v : a). b$}{
          \Return $\mathsf{union}(\mathsf{match}(\Gamma; M, m, a), \mathsf{match}(\Gamma, v : a; M, m, b))$
        }
        \Other{\Return $\emptyset$}
      }
    }
  \end{algorithm}
  
  The matching procedure in the saturation loop is handled by $\mathsf{matchInst}$ and $\mathsf{match}$.
  \begin{enumerate}
    \item Given context $\Gamma$, variable set $M$ and terms $m, h$,
      $\mathsf{match}(\Gamma; M, m, h)$ returns all $M$-unifiers between term $m$ and the $\mathsf{LFun}$ of subterms of $h$.
      The pseudocode for $\mathsf{match}$ is given in Algorithm \ref{matching}. An auxiliary function
      $\mathsf{unify}$ is used in the pseudocode. Given $\lambda C$ context $\Gamma$, variable set $M$
      and two $\lambda C$ terms $t_1, t_2$, $\mathsf{unify}(\Gamma; M, t_1, t_2)$ returns a complete set of
      $M$-unifiers of $t_1$ and $t_2$ under $\Gamma$. In Lean 4, the \texttt{isDefEq} function can be used perform
      unification, but it is incomplete and returns at most one unifier.
    \item Given context $\Gamma$ and terms $m, h$,
      $\mathsf{matchInst}(\Gamma; m, h)$ computes all instances of the hypothesis $h$ which has some subterm whose
      $\mathsf{LFun}$ is $\beta\eta$-equivalent to $m$. To do this, $\mathsf{matchInst}$ introduces all leading non-prop $\forall$
      quantifiers into the context (as free variables), collects all the newly introduced free variables into a variable set $M$,
      then computes $\mathsf{match}(\Gamma'; M, m, h')$, where $\Gamma', h'$ are $\Gamma, h$ after introduction of free variables.
      For each unifier $(\Gamma, \Gamma', \sigma)$ in $\mathsf{match}(\Gamma'; M, m, h)$, $\mathsf{matchInst}$ computes $\overline{\sigma}(h)$,
      then abstracts newly introduced free variables in $\sigma$ as $\forall$ binders to generate an instance of $h$.
      $\mathsf{matchInst}(\Gamma; m, h)$ returns the set of instances of $h$ generated by this procedure. 
  \end{enumerate}
  
  \begin{algorithm}\label{saturate}
    \DontPrintSemicolon
    \SetNoFillComment
    \SetKwFunction{matchOnePairFun}{\textsf{matchOnePair}}
    \SetKwFunction{saturateFun}{\textsf{saturate}}
    \caption{Main saturation loop of quantifier instantiation}
    \Fn{\saturateFun{$\Gamma; H, \vmaxInsts$}}{
      \Input{$\lambda C$ context $\Gamma$, list of $\lambda C$ terms $H$, and threshold $\vmaxInsts$}
      \Output{A list of $\lambda C$ terms}
      $\vhi := H$ \tcc*[h]{A list of hypothesis instances} \;
      $\vci := \mathsf{List.empty}()$ \tcc*[h]{A list of constant instances} \;
      \tcc*[h]{A queue of active constant and hypothesis instances} \;
      $\vactive := \mathsf{Queue.empty}()$\;
      \For{h : H}{
        $\vhi.\mathsf{push}((0, h))$ \;
        \For{$c : \mathsf{holInsts}(\Gamma; \emptyset, h)$}
          {$\vci.\mathsf{push}(c)$; $\vactive.\mathsf{push}((1, c))$}
      }
      \While{$! \ \vactive.\mathsf{empty}()$}{
        \lIf{$\vhi.\mathsf{size}() + \vci.\mathsf{size}() > \vmaxInsts$}{\Break}
        $(\vtype, \vfront) := \vactive.\mathsf{front}()$ \;
        $\vactive.\mathsf{popFront}()$ \;
        \eIf{$\vtype = 0$}{
          $\vprevci := \vci.\mathsf{copy}()$ \;
          \For{$c : \vprevci$}{
            $\mathsf{matchOnePair}(c, \vfront, \vci, \vhi, \vactive)$
          }
        }{
          $\vprevhi := \vhi.\mathsf{copy}()$ \;
          \For{$h : \vprevhi$}{
            $\mathsf{matchOnePair}(\vfront, h, \vci, \vhi, \vactive)$
          }
        }
      }
      $\vmonohi := \mathsf{List.empty}()$ \;
      \For{$h : \vhi$}{
        \lIf{$\mathsf{QMono}(\Gamma; \emptyset, h)$}{$\vmonohi.\mathsf{push}(h)$}
      }
      \Return $\vmonohi$
    }
    \;
    \Fn{\matchOnePairFun{$c, h, \vci, \vhi, \vactive$}}{
      $\vnewhi := \mathsf{matchInst}(\Gamma; c, h)$ \;
      \For{$\vnh : \vnewhi$}{
        \lIf{$\vnh \in \vhi$}{\Continue}
        $\vhi.\mathsf{push}(\vnh); \vactive.\mathsf{push}((0, \vnh))$ \;
        $\vnewci := \mathsf{holInsts}(\Gamma; \emptyset, \vnh)$ \;
        \For{$\vnc : \vnewci$}{
          \lIf{$\vnc \in \vci$}{\Continue}
          $\vci.\mathsf{push}(\vnc); \vactive.\mathsf{push}((1, \vnc))$ \;
        }
      }
    }
  \end{algorithm}
  
  The saturation loop of quantifier instantiation is shown in Algorithm \ref{saturate}.\footnote{This
  is the same as Algorithm \ref{saturateInMain} in Sect. \ref{sectinst}.}
  For simplicity, equational theorem generation is not shown here.
  Given a $\lambda C$ context $\Gamma$ and a list $H$ of hypotheses, $\mathsf{saturate}$ returns
  a list of instances of hypotheses in $H$ that are suitable for $\lambda_\to^*$ abstraction
  (i.e. satisfy the $\mathsf{QMono}$ predicate).
  Note that, in Lean-auto, when checking whether a hypothesis instance belongs
  to a collection (e.g., set, list, queue, etc.) of hypothesis instances,
  we test equality only up to \textit{hypothesis equivalence}.
  
  \begin{definition}
  For two $\lambda C$ terms $t_1, t_2$, $t_1$ and $t_2$ are equivalent as hypotheses
  iff $t_1$ is a hypothesis instance of $t_2$ and $t_2$ is a hypothesis instance of $t_1$.\footnote{In
  higher-order logic and beyond, there exists terms that are instances of each other but not definitionally equal.}
  \end{definition}
  
  Checking membership up to equivalence ensures that collections of hypothesis
  instances in our algorithms are free of redundant entries. Note that equivalence testing
  can be reduced to unification, which can in turn be approximated by \texttt{isDefEq}.
\fi

\maybeappendix{Experiment on Translation}\label{sstrans}
\ifdefined\cameraReady
\else
  In this appendix, we present the result of our small-scale experiment on the comparison
  between encoding-based translation and monomorphization. We would like to compare the
  output sizes of the translation procedures on the same Lean 4 problem. For
  monomorphization, we use Lean-auto's translation procedure and compute the sum of the
  sizes of the output $\text{HOL}$ problem. Since Lean-auto does not support
  encoding-based translation, we use the size of the original Lean 4
  expression as the surrogate for the output size. This is justified by
  the fact that encoding-based translations usually produce outputs that are larger
  than the input problems.

  We randomly sample 512 user-declared theorems from Mathlib4. For each theorem, we generate its corresponding
  problem, which consists of the statement of the theorem and the statements of all the theorems
  used in its proof. The size of a problem is the sum of the sizes of all the expressions in
  the problem. We use Lean 4's deterministic timeout mechanism and set ``maxHeartbeats''
  to ``65536'' for the monomorphization of each problem, without imposing extra time or memory limit.

  Note that Lean-auto's monomorphization is incomplete, and it might be unfair to
  compare monomorphization with encoding-based translation on problems where
  monomorphization fails to produce a provable output. Therefore, we conduct another
  experiment with the Lean-auto-provable\footnote{Here we use Duper
  as the backend solver, and employ \textit{Experimental Setup 1}
  described in \refappendix{desetup}. The option ``auto.mono.ignoreNonQuasiHigherOrder'' is set
  to ``true'', and ``maxHeartbeats'' is set to ``65536''.} subset of the 512 problems. Note
  that if a problem is proved by Lean-auto, Lean-auto's monomorphization
  must have produced a provable output on the problem, regardless of the backend solver.

  \begin{figure}
    \begin{center}\begin{tabular}{| c | K{6em} K{6em} |}
      \hline
                        & Full        & Filtered   \\ \hline
      \#Theorems        & 512         & 188        \\
      \#Fails           & 88          & 0          \\
      Avg enc size      & 1503.4      & 643.5      \\
      Avg mono size     & 112.3       & 62.6       \\
      Avg (mono size)/(enc size) & 0.2325 & 0.2308 \\ \hline
    \end{tabular}\end{center}
    \caption{Result of experiment on translation} \label{figtrans}
  \end{figure}

  The result is presented in Figure \ref{figtrans}. ``\#Fails'' is the number of theorems
  where Lean-auto's monomorphization produces error. Failed theorems are not included
  when computing statistics. ``Avg enc size'' is the average size of the output of encoding-based
  translation. As mentioned before, we use the size of the original problem as an under-approximation.
  ``Avg mono size'' is the average size of the monomorphized problem. ``Avg (mono size)/(enc size)''
  is the average ratio of the monomorphized size and the encoding-based size. The result
  indicates that monomorphization produces significantly smaller results compared to encoding-based
  translation.
\fi

\maybeappendix{Experiment on Reduction}\label{ssred}
\ifdefined\cameraReady
\else
  In this appendix, we investigate the possibility of reducing the input expressions
  before sending them to Lean-auto. When reducing expressions, Lean 4 allows users to
  control which constants are unfolded, with three \textit{transparency levels}: \textit{reducible},
  \textit{default} and \textit{all}. In the \textit{reducible} level, only a small portion of
  constants are unfolded. Lean-auto reduces all input expressions with the \textit{reducible} level,
  because this helps alleviate the definitional equality problem,
  and usually don't increase the expression size by too much. In the \textit{default} level,
  most non-theorem constants are reduced. Reducing with \textit{default} level will make
  many definitionally equal input expressions become syntactically identical, but might
  make the expressions become unacceptably large. In the \textit{all} level, all constants are
  unfolded (except for those marked with the special tag \textit{opaque}). Reducing with
  \textit{all} level will produce even larger expressions than with the \textit{default} level.

  We use the same 512 Mathlib4 theorems in \refappendix{sstrans}, and generate their corresponding
  problems in the same way. Experiment is conducted on Amazon EC2 \texttt{c5ad.16xlarge}. The
  time limit for each problem is 120 seconds, and the memory limit is 8GB.

  \begin{figure}
    \begin{center}\begin{tabular}{| c | K{6em} K{6em} K{6em} |}
      \hline
                        & reducible   & default          & all               \\ \hline
      \#Fails           & 0           & 83               & 202               \\
      Avg size before   & 791.5       & 588.3            & 487.7             \\
      Avg size after    & 2449.8      & 138579513.0      & 258118331.0       \\
      Avg size increase & 5.8$\times$ & 309146.5$\times$ & 1216555.0$\times$ \\
      \#$10 \times$ increase &    48  &              215 & 151               \\
      \#$10 \times$ increase + \#Fails & 48 &        298 & 353               \\ \hline
    \end{tabular}\end{center}
    \caption{Result of experiment on reducing input expressions} \label{figredinput}
  \end{figure}

  The result is presented in Figure \ref{figredinput}. ``\#Fails'' is the number
  of problems that exceeds time or memory limit. This represents the problems
  which are complex enough such that running reduction on them are prohibitively expensive.
  For each transparency level, the problems it fails on are excluded when computing its statistics. ``\#$10\times$ increase''
  is the number of problems whose size increases to at least $10\times$ its original
  size after reduction. Therefore, ``\#$10 \times$ increase + \#Fails'' roughly corresponds
  to the problems that become much harder to prove after reducing with the given
  transparency level. According to ``\#$10 \times$ increase + \#Fails'', both the \textit{default} and \textit{all} level
  produce unacceptable results on at least 50\% of the theorems, while for \textit{reducible}
  it's less than 10\%. This suggests that we should not reduce the input problem with
  \textit{default} or \textit{all} level, and therefore should handle the definitional equality problem
  using other methods.
\fi

\maybeappendix{Experiment on Duper}\label{ssduper}
\ifdefined\cameraReady
\else
  We conduct a small-scale experiment to compare the performance of Duper with
  and without Lean-auto. We use the same 512 Mathlib4 theorems in \refappendix{sstrans}, and generate their
  corresponding problems in the same way. We use Lean 4's deterministic timeout mechanism
  for resource control and set the timeout option ``maxHeartbeats'' to 65536.
  The option ``auto.mono.ignoreNonQuasiHigherOrder'' of Lean-auto is set to ``true''.
  As explained in \refappendix{desetup}, we employ \textit{Experimental Setup 1}
  in this experiment.

  \begin{figure}
    \begin{center}\begin{tabular}{| l | K{6em} K{7em}|}
      \hline
                        & Solved      & Avg Time(ms) \\ \hline
      With Lean-auto    & 189(36.9\%) & 1375.7       \\
      Without Lean-auto & 42(8.2\%)   & 1856.3       \\ \hline
    \end{tabular}\end{center}
    \caption{Comparison of Duper with and without Lean-auto.} \label{figssduper}
  \end{figure}

  We see that when Duper is used without Lean-auto, it only solves 8.2\% of the
  problems, and it is slower on solved problems compared to ``Duper with Lean-auto''.
  Duper also exhibits unexpected behaviors during the experiment.
  We find that Duper gets stuck on 7 of the 512 problems
  for more than 5 minutes. Moreover, we find that Duper spends 1741174 heartbeats
  on the theorem ``Measure\-Theory.Lp.simple\-Func.is\-Dense\-Em\-bedding'' before failing, which
  vastly exceeds our limit 65536. We suspect that in these cases, Duper runs into
  code not controlled by Lean 4's deterministic timeout mechanism.
  
  When we attempted full-scale evaluation of ``Duper without Lean-auto''
  on Mathlib4, we found similar issues. Duper gets stuck on problems for
  minutes and even hours. Manually recording these problems and filtering
  them out would require significant manual work.\footnote{Similar issues
  are also present when evaluating other tools, but are much
  less pronounced compared to ``Duper without Lean-auto''. Therefore, we
  were able to manually filter out these problems.} Therefore, we decided to
  not include ``Duper without Lean-auto'' in our full-scale evaluation.
\fi

\maybeappendix{Details on Theorem Proving Experiments}\label{desetup}
\ifdefined\cameraReady
\else
  Multiple experiments in this paper involve running Lean-auto
  or existing tools on Lean 4 theorems. Here, we present technical
  details of experimental setups used in these experiments.
  
  All the tools we evaluate, including Lean-auto and existing tools, are implemented
  as tactics in Lean 4. Each tactic in Lean 4 has a user-facing syntax and an underlying
  tactic function. To invoke a tactic, users can input the syntax of the tactic in Lean 4,
  potentially with extra information (such as a list of premises). Lean 4 will elaborate
  the syntax and call the underlying tactic function.

  A straightforward way to evaluate a tactic \texttt{tac} on a list $\mathit{Ts}$ of Mathlib4
  theorems is shown as \textit{Experimental Setup 1} in Figure \ref{figexs1}.

  \begin{figure}
    \begin{center}
    \fbox{\parbox{30em}{
      To run a tactic \texttt{tac} on a list $\mathit{Ts}$ of Mathlib4 theorems:

      \begin{enumerate}
        \item Import the entire Mathlib4
        \item For each theorem $T$ in $\mathit{Ts}$, collect all the theorems $h_1, \dots, h_n$ used
          in the proof of $T$. Then, call the underlying tactic function
          of \texttt{tac} on the statement of $T$ and record the result. If \texttt{tac} accepts premises, supply
          $h_1, \dots, h_n$ as the list of premises to the underlying tactic function.
      \end{enumerate}
    }}
    \end{center}
    \caption{Experimental Setup 1}\label{figexs1}
  \end{figure}

  However, \textit{Experimental Setup 1} is unfair because it favors \texttt{simp\_all}
  and \texttt{aesop}. This is related to the fact that these two tactics have access to
  theorems tagged with the ``simp'' attribute. Suppose a theorem $T$ in Mathlib4 is tagged with ``simp''.
  If we run \texttt{simp\_all} on $T$ after importing Mathlib4, then \texttt{simp\_all}
  will have access to the ``simp''-tagged $T$, which might cause it
  to find a proof of $T$ that uses $T$ itself.

  Therefore, we would like to make sure that a theorem $T$ is not already tagged with ``simp''
  when we run evaluation on $T$. A way to achieve this is to retrieve the Lean 4 file that declares
  $T$, execute all the commands \textit{before} the declaration of $T$, then run evaluation on
  the statement of $T$. This makes sure that $T$ is not declared (thus not marked with
  ``simp'') when we run evaluation on it.

  However, this method causes another problem. There are commands in Lean 4 that simultaneously
  decare multiple constants $c_1, \dots, c_n$. If there exists $i, j$ such that
  $c_i$ is a theorem and $c_j$ occurs in the statement of $c_i$, then running evaluation
  using the above method on $c_i$ will cause an ``unknown constant'' error,
  because $c_j$ is not declared when we run evaluation on the statement of $c_i$.
  Similarly, if $c_j$ occurs in the proof of $c_i$, then running evaluation using the above
  method is also problematic because the not-yet-declared $c_j$ would be passed to
  those tools that accept premises. Therefore, we would like to filter out theorems
  whose proof or type contains constants declared by the same command.

  To make our evaluation more closely resemble real use cases of Lean 4, we
  would like to invoke the user-facing syntax of the tactics instead of their underlying functions.
  This causes some more fails for premise-accepting tactics because
  many Mathlib4 proofs use non-user-declared theorems that are inaccessible to users.

  Our modified evaluation method is presented as \textit{Experimental Setup 2}
  in Figure \ref{figexs2}. We employ a per-file evaluation scheme
  for better efficiency.

  \begin{figure}
    \begin{center}
    \fbox{\parbox{30em}{
      To run a tactic \texttt{tac} on a Mathlib4 file $F$:

      \begin{enumerate}
        \item Retrieve the content of $F$
        \item For each command $C$ in $F$:
          \begin{enumerate}
            \item Record the environment $E$ before executing $C$. $E$
              contains all the constants declared by commands prior to $C$.
            \item Run command $C$ and record the constants $c_1, \dots, c_n$
              declared by it.
            \item Record the environment $E'$.
            \item Set the environment to $E$. This effectively removes
              $c_1, \dots, c_n$ from the environment.
            \item For each $1 \leq i \leq n$, if $c_i$ is a theorem and does
              not contain $c_j (1 \leq j \leq n)$ in its proof or type:
              \begin{enumerate}
                \item Collect all the theorems $h_1, \dots, h_n$ used in the proof of $c_i$
                \item Create the syntax $S$ that invokes \texttt{tac} on $c_i$. If \texttt{tac}
                  accepts premises, $h_1, \dots, h_n$ should be supplied to \texttt{tac} in the syntax.
                \item Run Lean 4 on $S$ and record the result.
              \end{enumerate}
            \item Set environment to $E'$. This adds back constants declared by $C$,
              which is necessary to the execution of later commands.
          \end{enumerate}
      \end{enumerate}
    }}
    \end{center}
    \caption{Experimental Setup 2}\label{figexs2}
  \end{figure}

  The experiments in Sect. \ref{sectexpr} employ \textit{Experimental Setup 2}.
  For other small-scale experiments in our paper, we use \textit{Experimental Setup 1}.
  This is because these small-scale experiments do not involve \texttt{simp\_all} and \texttt{aesop},
  and \textit{Experimental Setup 1} is a cleaner evaluation method compared to \textit{Experimental Setup 2}.

  Now, we discuss details of resource limit and benchmark generation.

  \subsubsection{Resource Limit:} For efficiency reasons, we would like each
  Lean 4 process to test multiple problems (instead of one problem per process).
  Lean 4 does not support setting time limit or memory limit for native code.
  Instead, it provides a resource control mechanism called \textit{deterministic timeout},
  which is controlled by the ``maxHeartbeats'' option. The deterministic timeout
  mechanism counts the number of times a low-level Lean 4 function is called, and interrupts
  the program if it exceeds ``maxHeartbeats''.
  
  In the experiments in Sect. \ref{sectexpr}, we mentioned that all the tools are
  given a time limit of 10 seconds. For native Lean 4 tools, including ``rfl'',
  ``simp\_all'', ``Aesop'' and ``Lean-auto + Duper'', we set ``maxHeartbeats''
  to 65536, which we have found to roughly correspond to 10 seconds in our experiments.
  For ``Lean-auto + TPTP/SMT Solver'', we set ``maxHeartbeats'' to 65536
  for Lean-auto's native Lean 4 code, and set timeout to 10 seconds for TPTP and SMT Solvers.
  Note that the setups are not imposing a strict 10 seconds limit on any of the tools.
  Therefore, we also record the total execution time of each tool on each problem,
  and problems that takes more than 10 seconds to solve are counted as fails.

  Note that the above discussion only applies to Sect. \ref{sectexpr}.
  For other small-scale experiments in our paper, since they do not involve
  external solvers, we set ``maxHeartbeats'' to a fixed value without imposing extra
  time or memory limits.

  \subsubsection{Benchmark Generation:} Either of \textit{Experimental Setup 1}
  or \textit{Experimental Setup 2} naturally gives rise to a benchmark generation method.
  For \textit{Experimental Setup 1}, the corresponding benchmark set is all the
  user-declared Mathlib4 theorems\footnote{A constant is a Mathlib4 constant
  iff it is declared by a \texttt{.lean} file in Mathlib4. Note that the environment
  after importing Mathlib4 also contains constants declared in libraries that Mathlib4
  depend on.} in the environment after importing Mathlib4, which amounts to 178026 theorems. For
  \textit{Experimental Setup 2}, the corresponding benchmark set is all the
  user-declared theorems generated by the commands (in the Mathlib4 files) executed
  during the experiment. We find a slight difference (around 100 theorems) in the benchmark
  sets generated by \textit{Experimental Setup 2} when testing different tools.
  This is potentially due to issues related to individual tools.\footnote{
  Note that in \textit{Experimental Setup 2}, execution of tools interleave with
  execution of commands in Mathlib4 files, and execution of commands produce
  constants, which are then filtered to produce the benchmark set. If the tool crashes or
  causes other side effects, it could affect the constants produced by the commands.}

  In the experiments in Sect. \ref{sectexpr}, the benchmark set we use is the
  intersection of the above two benchmark sets. For each Mathlib4 file $F$, we record
  both the set of theorems from $F$ after importing the entire Mathlib4 and the set of theorems generated by
  executing commands in $F$, then compute the intersection of the two sets. This gives a total of
  176904 theorems. After filtering out the 27762 theorems whose proof or type contains
  constants declared in the same command, our final benchmark set consists of 149142 theorems.

  Note that the above benchmark generation method only applies to Sect. \ref{sectexpr}.
  For our small-scale experiment, we randomly sample from user-declared Mathlib4 theorems
  in the environment after importing Mathlib4.
\fi

\end{document}